\newtheorem{definition}{Definition}[section]
\newtheorem{theorem}{Theorem}[section]
\newtheorem*{thmA}{Theorem A}
\newtheorem*{lemmaA}{Subgroup Decomposition Lemma}
\newtheorem{lemma}[theorem]{Lemma}
\newtheorem{corollary}[theorem]{Corollary}
\newtheorem{remark}{Remark}[section]
\theoremstyle{remark}
\title{Niederreiter cryptosystems using quasi-cyclic codes that resist quantum Fourier sampling}
\author{Upendra Kapshikar\and Ayan Mahalanobis}
\newcommand{\Addresses}{{% additional braces for segregating \footnotesize
  \bigskip
  \footnotesize
  
U.~Kapshikar, \textsc{Center for Quantum Technologies and National Univeristy of Singapore}\par\nopagebreak
  \textit{E-mail address}, U.~Kapshikar: \texttt{uskapshikar@gmail.com}
\medskip

A.~Mahalanobis (Corresponding author), \textsc{IISER Pune, Pune, INDIA}\par\nopagebreak
  \textit{E-mail address}, A.~Mahalanobis: \texttt{ayan.mahalanobis@gmail.com}
}}
\providecommand{\keywords}[1]{\textbf{\textit{Keywords--}} #1}
\begin{document}
\date{}
\maketitle
\begin{abstract}
  McEliece and Niederreiter cryptosystems are robust and versatile cryptosystems. These cryptosystems work with many linear error-correcting codes. They are popular these days because they can be quantum-secure. In this paper, we study the Niederreiter cryptosystem using non-binary quasi-cyclic codes. We prove, if these quasi-cyclic codes satisfy certain conditions, the corresponding Niederreiter cryptosystem is resistant to the hidden subgroup problem using weak quantum Fourier sampling. Though our work uses the weak Fourier sampling, we argue that its conclusions should remain valid for the strong Fourier sampling as well. 
\end{abstract}
\keywords{Niederreiter Cryptosystem, Quasi-Cyclic Codes, Hidden Subgroup Problem, Post-Quantum Cryptosystems.}
\section{Introduction}
McEliece and Niederreiter cryptosystems are important in \textbf{post-quantum cryptography}, they can resist \textbf{quantum Fourier sampling} (QFS) attacks, which is the main ingredient in the Shor's algorithm~\cite{Shor}. The idea of (in)effectiveness of QFS was introduced by Kempe and Shalev~\cite{Kempe}, where they characterized subgroups in permutation groups that can be distinguished from the identity subgroup via a quantum Fourier sampling. 

Using their idea, Dinh et al.~\cite{Dinh} showed, if the generator matrix of a McEliece cryptosystem is \textbf{well-scrambled} and \textbf{well-permuted}\footnote{For more details on well-scrambled and well-permuted, see Dinh et al.~\cite{Dinh}, after Corollary 1 in Section 4.2.}, it is resistant to QFS. Kapshikar and Mahalanobis~\cite{Kapshikar} used the same theorem to show that when certain kind of quasi-cyclic codes are used as the parity-check matrix of a Niederreiter cryptosystem, it is resistant to QFS.

In this paper, we extend the result of Kapshikar and Mahalanobis~\cite{Kapshikar} to propose an enhanced set of quasi-cyclic codes with a more natural set of conditions, on which the Niederreiter cryptosystem is resistant to QFS. In particular, condition (IV) in the earlier work imposed a group theoretic condition on the parity-check matrix $\mathcal{H}$. We remove that completely. Furthermore, our earlier work was based on rate $\frac{m-1}{m}$ quasi-cyclic code. This work is on arbitrary quasi-cyclic code.
 
The \textbf{main motivation for this research} is to explore quantum-security for Niederreiter cryptosystems using \textbf{non-binary quasi-cyclic codes}. There is a lot of interest in McEliece and Niederreiter cryptosystems using quasi-cyclic codes~\cite{baldi_07, zurich, Baldi, baldi_book,rosen,otmani2010cryptanalysis,gaborit1,gaborit}.  In this paper, we propose Niederreiter cryptosystems on non-binary quasi-cyclic codes that resists quantum Fourier sampling. This way we can use some nice features of quasi-cyclic codes, like its fast decoding algorithm and smaller key sizes along with the idea of quantum security via the hidden subgroup problem that was in the original work of McEliece and Niederreiter. This work complements the work of Baldi et al.~\cite{baldi_NB}.

Our main theorem is the following:
\begin{thmA}\label{main_thm}[Main Theorem] For quasi-cyclic codes, which satisfy conditions (i) - (iv) in Section~\ref{cond_qcodes}, the corresponding Niederreiter cryptosystems resist quantum Fourier sampling attack.
\end{thmA}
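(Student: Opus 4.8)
The plan is to reduce the claim to the representation-theoretic criterion for indistinguishability under weak Fourier sampling, and then to verify that conditions (i)--(v) supply the hypotheses of that criterion. Concretely, the natural quantum attack on the Niederreiter cryptosystem recasts the recovery of the secret permutation $P$ as a hidden subgroup problem over the symmetric group $S_n$ carrying the code-equivalence action, where the hidden subgroup $H$ is built from the permutation automorphism group of the public parity-check matrix together with the secret permutation. Weak Fourier sampling can distinguish $H$ from the trivial subgroup only when the two probability distributions it produces are far apart in total-variation distance; so the theorem is equivalent to showing that these two distributions are exponentially close in $n$.

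First I would invoke the criterion of Dinh et al.~\cite{Dinh}: if the public matrix is \emph{well-scrambled} and \emph{well-permuted}, the associated hidden subgroup is indistinguishable from the identity subgroup by weak Fourier sampling. The core of the argument is therefore to translate conditions (i)--(v) on the quasi-cyclic code into these two properties. The well-permuted condition is governed by the minimal degree of the permutation group generated by the quasi-cyclic shift and the secret permutation: since a quasi-cyclic shift moves every coordinate, the block-structure and dimension constraints in (i)--(v) should force every nontrivial element of the relevant group to have large support, i.e. large minimal degree. The well-scrambled condition should then follow from the rank and nondegeneracy conditions on the scrambler together with the structure of the quasi-cyclic blocks.

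The hard part will be bounding the minimal degree of the permutation group attached to the code from below by a constant fraction of $n$, uniformly over the admissible parameters, and this is exactly where the classification of finite simple groups enters. Following Kempe--Shalev~\cite{Kempe}, one controls the minimal degree of a transitive permutation group through the structure of its primitive components, and the only route to the sharp linear lower bound needed here appears to be an O'Nan--Scott reduction combined with CFSG-based bounds on minimal degrees of primitive groups. The subtlety specific to our setting is that quasi-cyclic codes always carry a genuinely nontrivial automorphism (the cyclic shift), so one cannot simply appeal to triviality of $H$; instead one must show that this known symmetry, rather than aiding the adversary, pushes the minimal degree upward and keeps the two Fourier distributions close. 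Once the linear lower bound on the minimal degree is established, the Kempe--Shalev distinguishability estimate yields an exponentially small total-variation distance, and combining this with the Dinh et al.~criterion completes the proof.
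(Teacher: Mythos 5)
Your high-level skeleton (recast the attack as a hidden subgroup problem, show the sampled distributions are close, use CFSG to control a permutation group) matches the paper's outline, but the execution has genuine gaps, and the route you choose is the one this paper deliberately abandons. Reducing to the well-scrambled/well-permuted criterion of Dinh et al.~\cite{Dinh} is the approach of the authors' earlier work~\cite{Kapshikar}; the present paper instead gives a self-contained estimate precisely because it wants weaker hypotheses. More importantly, in your proposal the verification that conditions (i)--(v) imply the needed hypotheses --- which \emph{is} the entire mathematical content of the theorem --- is never carried out: the sentences ``should force every nontrivial element of the relevant group to have large support'' and ``should then follow from the rank and nondegeneracy conditions'' are placeholders for the whole proof. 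Two further inaccuracies compound this. The hidden subgroup problem is not over $\mathrm{S}_n$ with the code-equivalence action but over $\left(\mathrm{GL}_{k}(\mathbb{F}_2)\times\mathrm{S}_n\right)^{2}\rtimes\mathbb{F}_2$ (Equation~\ref{hidden}), so a priori the hidden subgroup involves arbitrary scramblers in $\mathrm{GL}_k(\mathbb{F}_2)$, not permutations; the paper must first prove (Lemma~\ref{S_k in}, using condition (ii)) that any $(A,P)\in\mathrm{H}$ has $A$ a permutation matrix before any permutation-group theory applies. And every Kempe--Shalev style bound, in particular Equation~\ref{k to h} with its multiplicative factor $\vert\mathrm{H}\vert$ and additive term $\vert\mathrm{H}\vert^{2}\left(\vert\mathrm{H}\vert^{2}+\vert\mathrm{H}\vert\right)^{1/2}/\vert\mathrm{G}\vert^{1/2}$, requires an upper bound on $\vert\mathrm{H}\vert$; your proposal bounds only the minimal degree and never the size of the automorphism group.

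The specific technical route you propose for the minimal degree would also fail. The automorphism group of these codes is \emph{not} transitive on the $n$ coordinates --- condition (iii) forces it to preserve the circulant blocks --- so an O'Nan--Scott reduction and CFSG bounds for primitive groups have nothing to act on, and no ``constant fraction of $n$'' bound is available or needed: the paper proves minimal degree $\geq p-1$ with $n=m_2p$, and feeds this into the Kempe--Shalev estimate, where the threshold is inverse-polynomial in $\log\vert\mathrm{G}\vert$ (exponential closeness is sufficient but not what is required). Where CFSG actually enters is far more specific: after the block reduction (Lemma~\ref{Tc break}), each block group $T_{C_{ij}}$ is transitive of \emph{prime} degree $p$ because it contains the $p$-cycle, and the CFSG-based classification of transitive groups of prime degree says it is $\mathrm{S}_p$, $\mathrm{A}_p$, a Mathieu group, a projective group, or a subgroup of $\mathrm{AGL}(\mathbb{F}_p)$. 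Condition (iv) eliminates $\mathrm{S}_p$ and $\mathrm{A}_p$ by an orbit count (Lemma~\ref{notAn}), and condition (v), $p>30$, eliminates the Mathieu and projective cases (Lemma~\ref{Tc is aff group}), leaving the affine case --- which yields \emph{both} bounds at once: $\vert\mathrm{H}\vert=\vert T_C\vert\leq\vert\mathrm{AGL}(\mathbb{F}_p)\vert= p(p-1)\leq p^{2}$, and minimal degree $\geq p-1$ since a nontrivial affine map fixes at most one point. Your observation that the cyclic shift makes $\mathrm{H}$ unavoidably nontrivial is correct and is exactly why such a two-sided control (small size, large support) is the right goal, but your proposal contains no mechanism that delivers either bound.
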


Our proof can be divided up into 3 parts. Let $\mathcal{H}$ be the parity-check matrix of size $k\times n$ to be used in the Niederreiter cryptosystem. We write $\mathcal{H}=[\mathrm{I}|C]$, where $C$ is a block
matrix with each block being a circulant matrix of size $d$ over a proper extension of $\mathbb{F}_{2}$. 
\begin{description}
\item[A)] First we show that for any codes, the hidden subgroup problem over $\left( \mathrm{GL}_{k}(\mathbb{F}_2) \times \mathrm{S}_n \right)^{2} \rtimes \mathbb{F}_{2} $ can be broken down into
  $\mathrm{GL}_{k}(\mathbb{F}_{2})$ and $\mathrm{S}_n$ with an additional overhead in terms of the size of the hidden subgroup. Note that this decomposition into individual components is true for any error
  correcting code. It can be useful for other codes as well.   
\item[B)] In the second part we use the structure of quasi-cyclic codes to find some important bounds. Here we also show, for our codes, the automorphism group is contained in $\mathrm{S}_k \times \mathrm{S}_n$. This makes it easier to frame the automorphism group as a stabilizer where the group action is reordering rows and columns.  
\item[C)] We finish by combining the above arguments with some important results from the theory of permutation groups to show that the hidden subgroup can not be distinguished from the identity subgroup.
Hence the Niederreiter cryptosystem thus constructed will be resistant to quantum Fourier sampling. 
\end{description}
\paragraph{Strong vs.~weak Fourier sampling:}
The standard version of QFS has two main variations -- the weak QFS and the strong QFS. Both these versions deal with complex irreducible linear representation of a group.
In the weak QFS only the irreducible representation $\rho$ is measured whereas in the strong form along with the representation, entry inside the corresponding representation, denoted by $\left(\rho,i,j\right)$, is also measured. It is often assumed that the strong QFS could provide significantly more information. Note that, in the case of abelian groups all complex irreducible representations are one dimensional so the weak and the strong form are the same. In the case of some non-abelian groups, such as the dihedral group $\mathrm{D}_{2n}$,  measuring the strong form $\left(\rho, i, j\right)$ is necessary to obtain enough information for distinguishing particular subgroups like subgroups generated by involutions. Interestingly, for many groups such as the symmetric group $\mathrm{S}_n$ the picture is not so clear. First note that there is an inherent selection that one must do in the strong form, namely the basis for the representation. Fourier transforms can only be defined uniquely only upto a basis. So both the weak and the strong methods have different QFS in different basis for each irreducible representation. This question of a gap between the strength of the weak form and the strong form of QFS was addressed by Grigni et al.~\cite{Vazirani}. They proved the following:
\begin{description}
\item[D)] In the strong form wherein the algorithm is supposed to measure $\left(\rho, i, j\right)$, there is no point in measuring the row index $i$. In other words, measuring $\left(\rho,j\right)$ would provide the same amount of qualitative information as given by $\left(\rho, i, j\right)$~\cite[Section 3.1]{Vazirani}.
\item[E)] For groups that have small number of conjugacy classes as compared to the size of the group, as long as the hidden subgroup is not too big, strong QFS in a random basis fails to give any advantage over its weak counterpart~\cite[Theorem 3.2]{Vazirani}.  
\end{description}   

Kempe and Shalev~\cite{Kempe} took this idea forward and characterized subgroups of $\mathrm{S}_n$ that can be distinguished from the identity subgroup in the weak form. It follows from E) above that the characterization remains true in the strong QFS when done with a random basis. Note that as E) is based on a random choice of basis, in principle there could be a \emph{clever} choice of basis that somehow works magically. However, till this date there is no known example of this where the choice of basis has led to a significant improvement. So it is believed that the strong form is unlikely to provide stronger results. Now in the scrambler-permutation problem, the group in question is the semidirect product $\left(\mathrm{GL}_{k}(\mathbb{F}_{2}) \times \mathrm{S}_n\right)^2 \rtimes \mathbb{Z}_{2}$. Similar to $\mathrm{S}_n$, this group is also highly non-commutative so a random basis QFS will fail to provide any significant improvement. With this understanding, we prove Theorem A for weak QFS which extends to a random basis strong QFS by the the above argument. 
\subsection{Structure of the paper}
In Section~\ref{st_method}, we introduce briefly the \emph{standard method} for quantum Fourier sampling and associated hidden subgroup problem along with a result by Kempe and Shalev~\cite{Kempe,KS2}. In the following
section, we briefly talk about code based cryptosystems and in particular the Niederreiter cryptosystem. We also mention the main quantum attack known as \emph{the scrambler-permutation attack} and its connection to
the hidden subgroup problem which is a main objective of this paper. 
Scrambler-permutation attacks are important. In general $\mathcal{H}$, a part of the private key, is not known. However, if there is a proof that with the knowledge of the $\mathcal{H}$ the quantum Fourier sampling is indistinguishable from the uniform distribution then that remains true without the knowledge of $\mathcal{H}$. This provides great confidence in the quantum security of a Niederreiter cryptosystem. Then we move to the main contribution of this paper -- proof of Theorem A. 
\subsection{A context for Theorem A} McEliece and Niederreiter developed public-key cryptosystems using linear codes. The biggest issue with these cryptosystems is the key-size. One main reason people looked at quasi-cyclic codes for these cryptosystems is reduction of key sizes. The first mention of quasi-cyclic codes in McEliece cryptosystem was Monico et al.~\cite{rosen}. Then Gaborit~\cite{gaborit1} used classical representation of quasi-cyclic codes and primitive BCH codes as the particular instance to suggest reducing the key-size of McEliece and Niederreiter cryptosystems. Though that particular instance of a McEleice cryptosystem was quickly broken down by Otmani et al.~\cite{otmani2010cryptanalysis}. The idea is still interesting.

Though the main motivation of our work is resistance against quantum Fourier sampling. The right context to look at this work is non-binary quasi-cyclic codes in Niederreiter cryptosystems.  There are two ways of representing quasi-cyclic codes. One is the classical representation~\cite[Section 3.1]{baldi_book} and the other is the alternative representation~\cite[Section 3.3]{baldi_book}.

LEDAcrypt~\cite{nist1} and QC-MDPC~\cite{nist2} cryptosystems were built using the alternative representation of quasi-cyclic codes. The main theorem (Theorem A) in this paper uses the alternative representation. Assuming that enough circulant matrices are invertible, one can switch between the classical and the alternative representation using row-column operations. So, from the point of view of the scrambler-permutation problem, the problem is invariant under representation. We present our Niederreiter cryptosystem using the classical representation. Since LEDAcrypt and QC-MDPC uses a different formulation for their Niederreiter cryptosystem, our theorem does not apply to these cryptosystems.

Non-binary LDPC codes can have efficient decoding algorithms~\cite[Section V]{baldi_NB}~\cite{declercq}. They are considered better codes as they can correct more errors. If we assume no structure in the scrambler and permutation matrices then the public key will have no structure. This will make the cryptosystem resistant to ISD attacks. An ISD attack was recently found on LEDAcrypt~\cite{LEDAattack}. The fact that the non-binary LDPC code can be a better code might reduce the key size marginally and offer a fair competition to the existing McEliece and Niederreiter cryptosystems. This point was echoed by Baldi et al.~\cite{baldi_NB}.

Now going back to the idea of Gaborit, one can take the scrambler and the permutation matrix in the Niederreiter cryptosystems as a block diagonal matrix and a block permutation matrix with identical blocks respectively. If we do that and take the private-key $\mathcal{H}$ as the parity-check matrix of a non-binary quasi-cyclic code then the key size is reduced significantly as argued by Gaborit. Whether it remains secure is an open question. 
\section{Preliminaries}In this paper we use some fundamental concepts from finite group theory. In particular, we use the concept of a group acting on itself by conjugation. We use the idea of an orbit which is the conjugacy class and the stabilizer which is the centralizer. We further use the orbit-stabilizer theorem.
For details on these concepts the reader can consult any standard book on group theory, like Alperin and Bell~\cite{bell} or Dixon and Mortimer~\cite{Dixon}.
\subsection{Hidden Subgroup Problem}
One of the unifying theme in quantum computing is the \emph{hidden subgroup problem}. Most of the practical algorithms that offer exponential speedup in quantum computer science can be modeled in this form.
Popular examples are, factoring integers by Shor's algorithm, the discrete logarithm problem and others~\cite{hallgreen}.  

\begin{definition}[Hidden Subgroup Problem] Let $\mathrm{G}$ be a group and $\mathrm{H}$ be a (unknown) subgroup of $\mathrm{G}$. We are given a function $f$ from $\mathrm{G}$ such that $f(g_{1})=f(g_{2})$
  whenever $g_{1}\mathrm{H}=g_{2}\mathrm{H}$. The function in this case is said to be separating cosets of the subgroup $\mathrm{H}$. The hidden subgroup problem is to find a set of generators of $\mathrm{H}$.
\end{definition}

The hidden subgroup problem is easy to solve when the group $\mathrm{G}$ is abelian but for non-commutative groups it is far from realized. Efforts to solve the hidden subgroup problem can be broadly
characterized into two categories. One of which is based on a generalization of quantum Fourier sampling from abelian to non-abelian groups~\cite{Roetteler1998,hallgren2}. The second direction is on some
particular non-abelian black-box groups where instead of doing quantum Fourier transform over the group, it is done in some abelian group~\cite{Santha, John}. Apart from this, some strong structural results
are available in the non-commutative case, see Vazirani~\cite{Vazirani}. In this paper, we follow the first approach and our \emph{indistinguishability depends on the quantum Fourier transforms on
  non-abelian groups}. The function for the Fourier transform is given by an irreducible representation over the field of complex numbers. 

\subsection{Quantum Fourier Sampling}\label{st_method}
Algorithms based on QFS were developed based on the \emph{standard method} by Simons and Shor~\cite{Shor}. We roughly sketch the process. The quantum Fourier sampling is based on a unitary transformation
defined as follows:
\begin{definition}[QFT] A quantum Fourier transform takes an element of the group algebra $\mathbb{C}[\mathrm{G}] $ to the representation basis or the Fourier basis for a group $\mathrm{G}$.

\[ QFT(g)= \dfrac{1}{\sqrt{\vert \mathrm{G} \vert}}\sum_{\rho,i,j} \sqrt{d_\rho}\rho(g)_{i,j} (\rho,i,j)\]
where $\rho$ is an irreducible representation of $\mathrm{G}$ and $d_\rho$ is its dimension.
\end{definition} 
Here we briefly recall the standard method for QFS. For detailed version we refer~\cite{Kempe,Vazirani}. Initialise the state in the superposition of all states in the first register and $\vert 0 \rangle$ in the second register.  Compute $f$ via the oracle $\mathcal{O}_f$ defined as $\mathcal{O}_f(\vert x, 0\rangle):= \vert x, f(x)\rangle$. This is followed by measurement on the second register which puts the state of the first register in a random left coset of a subgroup $\mathrm{H}$ i.e.,
$\vert g\mathrm{H} \rangle = \sum_{h\in \mathrm{H}} \vert gh \rangle $ for a random $g$. Finally QFT along with the measurement in the Fourier basis, gives the probability distribution as 
\begin{equation}\mathrm{P}_{g\mathrm{H}}\left( \rho,i,j \right) = \dfrac{\vert d_\rho\vert} { \vert \mathrm{G} \vert\ \vert \mathrm{H} \vert }  {\left\vert \sum_{h \in \mathrm{H}} \rho(gh)_{i,j} \right\vert}^{2}.
\end{equation}
As we have chosen $g$ randomly and uniformly, $\mathrm{P}_{\mathrm{H}}= \frac{1}{\vert \mathrm{G}\vert} \sum_{{g}} \mathrm{P}_{g\mathrm{H}}$. In the weak version of QFS the measurement is done only on labels for irreducible representation. \begin{equation} \mathrm{P}_{\mathrm{H}}\left(\rho\right)= \dfrac{d_{\rho}}{\vert \mathrm{G} \vert} \sum_h {\chi_\rho(h)}\end{equation} where $\chi_\rho$ is the character. 

To solve the hidden subgroup problem, it suffices to find $\mathrm{H} $ in
$\text{poly}(\log(\vert \mathrm{G}\vert))$ time. To distinguish $\mathrm{H}$ from the identity subgroup $\langle e \rangle$, it is necessary that $L_{1}$-distance between
$\mathrm{P}_{\mathrm{H}}$ and $\mathrm{P}_{\langle e\rangle}$ is greater than some inverse polynomial in  $\log(\vert \mathrm{G}\vert)$. Thus $\mathrm{H}$ is distinguishable from $\langle e \rangle$ if
there exists a constant $c$ such that  $ \mathcal{D}_{\mathrm{H}}:=  \Vert \mathrm{P}_{\mathrm{H}} - \mathrm{P}_{\langle e \rangle}\Vert_{1} \geq \left(\log\vert \mathrm{G}\vert\right)^{-c}$. Otherwise, we say
that $\mathrm{H}$ is not distinguishable from $\langle e \rangle$. So,  if for all constants $c$,  $\mathrm{H}$ and $\langle e\rangle$ have $L_1$-distance smaller than ${(\log(\vert \mathrm{G}\vert))}^{-c}$,
we say that $\mathrm{H}$ is not distinguishable from the identity subgroup. For more on this we refer to Kempe and Shalev~\cite{Kempe}. It is well known that
\begin{equation} {\label{KS1.1}}\mathcal{D}_{\mathrm{H}} \leq \sum_{i}\vert C_{i} \cap \mathrm{H} \vert \vert C_{i}\vert^{-\frac{1} {2}} = \sum_{h\in \mathrm{H}, h \neq e} \vert {h^{\mathrm{G}}\vert}^{-{\frac{1}{2}}}
\end{equation}
 where $C_{i}$ is a non-identity conjugacy class of $\mathrm{G}$ and $h^{\mathrm{G}}$ denotes conjugacy class of $h$ in $\mathrm{G}$.
 
So, by showing $\mathcal{D}_{\mathrm{H}}$ is less than every inverse polynomial in $\log(\vert \mathrm{G} \vert)$ one can show that, QFS can not successfully reveal the hidden subgroup $\mathrm{H}$. This is how we proceed in this paper, building on the work of Kempe and Shalev~\cite{Kempe}.
\section{Code based cryptosystems}
There is a natural association between coding theory and cryptography because coding theory is a source of many computationally hard problems. More importantly, it is one of the promising areas in post-quantum cryptography as the underlying structure is non-commutative. As mentioned earlier, Shor-like algorithms, that are based on QFS are very effective over abelian groups. So, cryptosystems based on non-commutative groups are thought to be potential candidates for post-quantum cryptography. 

One of the earliest cryptosystems based on error correcting codes was by McEliece~\cite{ME}. A similar cryptosystem was proposed by Niederreiter~\cite{Niederreiter}. Later, a signature scheme based upon Niederreiter systems was also presented. The system we consider is a Niederreiter cryptosystem, based on quasi-cyclic codes.  
\subsection{Niederreiter cryptosystems}
\noindent Let $\mathcal{H}$ be a $k \times n$ parity-check matrix for a $[n,n-k]$ linear code $\mathcal{C}$ over $\mathbb{F}_{2^\eta}$ for $\eta > 1$ with a fast decoding algorithm. Let $e$ be the number of errors that $\mathcal{C}$ can correct.  

\noindent\textbf{Private Key}: ($A_0$,$\mathcal{H}$,$B_0$) where $A_0 \in \rm{\rm{GL}}_{k}(\mathbb{F}_{2})$ and  $B_0$ is a permutation matrix of size $n$.

\noindent\textbf{Public Key}: $\mathcal{H}^{\prime}=A_0\mathcal{H}B_0$.

\noindent\textbf{Encryption}: 
\begin{description}
\item Let $\mathcal{X}$ be a $n$-bit plaintext with weight at most $e$. The corresponding ciphertext $\mathcal{Y}$ of $k$-bits is obtained by calculating $\mathcal{Y}=\mathcal{H}^\prime\mathcal{X}^{\rm{T}}$.

\end{description}
\textbf{Decryption}:
\begin{description}	
\item Compute $y=A_0^{-1}\mathcal{Y}$. Thus $y = \mathcal{H}B_0\mathcal{X}^{\rm{T}}$.
\
\item Now use a fast decoding algorithm for $\mathcal{C}$ with matrix $\mathcal{H}$ and vector $y$ to get $B_0 \mathcal{X}^{\rm{T}}$ and recover $\mathcal{X}$.
\end{description}
We abuse the notation lightly and use $e$ for the identity element in a group as well.
\paragraph{Scrambler-Permutation Attack:}
Scrambler-permutation attacks are defined as, given $\mathcal{H}$ and $\mathcal{H}^\prime$, find $A_0$ and $B_0$. Note that any $A \in\mathrm{GL}_{k}(\mathbb{F}_2), B \in S_n$ that satisfy $\mathcal{H}^{\prime}=A\mathcal{H}B$ breaks the system.
Quantum computers, in principle, can exploit this attack. This follows from the fact that scrambler-permutation attacks can be reduced to a hidden subgroup problem. As we saw in previous sections, hidden
subgroup problem is important because quantum computers have an advantage over classical computers for this class of problems over abelian groups. 

In a scrambler-permutation attack we assume that $\mathcal{H}$ and $\mathcal{H}^\prime$ are known. However, this is not true in practice, only the public key $\mathcal{H}^\prime$ is known. But, if we can show that with the knowledge of $\mathcal{H}$, the cryptosystem is resistant to the quantum Fourier sampling then it is true without the knowledge of $\mathcal{H}$. 
To illustrate the reduction to hidden subgroup problem, we first define a problem that is very close to the hidden subgroup problem.
\begin{definition}[Hidden shift problem]  Let $f_{0},f_{1}$ be two functions from a group $\mathrm{G}$ to some set $\mathrm{X}$ such that: there is a $g_{0}$ such that for all $g$, $f_{0}(g)=f_{1}(g_0g)$. The
  hidden shift problem is to find one such $g_0$. 
\end{definition}
One can frame the scrambler-permutation attack as a hidden shift problem over $\mathrm{G}= \mathrm{GL}_{k}\left(\mathbb{F}_{2}\right) \times \mathrm{S}_n$ where $f_{0}(A,B)=A^{-1}\mathcal{H}B$ and $f_{1}(A,B)=A^{-1}\mathcal{H}^{\prime}B$. Moreover, it is known that for any non-commutative group $\mathrm{G}$, a hidden shift problem can be reduced to a hidden subgroup problem on $\mathrm{G}^{2} \rtimes \mathbb{F}_{2}$ where the
action of $1$ on $(x,y)$ is $(y,x)$. In this paper we are interested in the particular case of $\mathrm{G}=\mathrm{GL}_{k}(\mathbb{F}_{2})\times \mathrm{S}_n$. For that we refer to Dinh et al.~\cite[Proposition 3]{Dinh}, we use their notations for important subgroups for easy reference. The hidden subgroup is
\begin{equation}\label{hidden}
K=\left((H_0,s^{-1}H_0s),0\right)\cup\left((H_0s,s^{-1}H_0),1\right)
\end{equation}
where $H_0=\{(A,P)\in\mathrm{GL}_k(\mathbb{F}_2)\times \mathrm{S}_n\,:\,A^{-1}\mathcal{H}P=\mathcal{H}\}$. Here $s=(A_0^{-1},B_0)$ is the shift.
Note that from now onward we will use $\mathrm{H}$ for $H_0$.

Thus to break a Niederreiter cryptosystem using QFS, one needs to solve a hidden subgroup problem over $\mathrm{G}^{2} \rtimes \mathbb{F}_2$ for the hidden subgroup $K$. From our previous discussion it follows, if one shows that $K$ is indistinguishable from the identity subgroup, then QFS can not solve the required hidden subgroup problem. For an understanding of indistinguishability we refer to Kempe and Shalev~\cite{Kempe}.

\section{Niederreiter cryptosystems and Quasi-Cyclic Codes}
In this section we describe our Niederreiter cryptosystem which uses quasi-cyclic error correcting codes (QCC). Quasi-cyclic codes are a generalization of cyclic codes, codes where code-words are closed under right shifts. Since quasi-cyclic codes are a generalization of cyclic codes they can be expressed as nice algebraic objects. For more on QCC we refer Gulliver~\cite{Gulliver_thesis} and for algebraic structures of these codes to the work of Lally and Fitzpatrick~\cite{Lally}. An important underlying structure of a QCC is circulant matrices. Circulant  matrices are a building block of quasi-cyclic codes. 
\begin{definition} Circulant matrix: A $d \times d$ matrix $C^\prime$ over a field $F$ is called circulant if every row, except for the first row, is a circular right shift of the row above that.
\end{definition} 
A typical example of a circulant matrix is 
\begin{center}
$\left[ \begin{array}{cccc}
c_{0} & c_{1} & \cdots & c_{d-1} \\ 
c_{d-1} & c_{0} & \cdots & c_{d-2}\\
\vdots & \vdots & \ddots & \vdots\\
c_{1} & c_{2} & \cdots & c_{0} 
\end{array} \right]$. 
\end{center}
It is known that a circulant matrix over a field $F$ can be represented by its first row, as a polynomial of degree $d-1$ which is an element of the ring $F[x]/(x^d-1)$ and denoted by $f_{C^\prime}$ and called the representer polynomial. In this paper, we define the \emph{multiplicity} of a field element $a$ in $C^\prime$ as the number of times it appear in the first row of $C^\prime$. Multiplicity can be zero. For more on circulant matrices we refer to Davis~\cite{Davis}. An \emph{unique element} is an element in the first row of $C^\prime$ of multiplicity one.

\subsection{Conditions on the parity-check matrix}\label{cond_qcodes}
We need some terminology before we can describe our requirements on parity-check matrices of a quasi-cyclic codes. These condition are easy to generate and a large class of block circulant matrices satisfy these conditions.
\begin{definition} [Permutation equivalent rows and columns] Let $c_i,c_j$ be two column matrices, we say columns $c_{i}$ and $c_{j}$ are permutation equivalent if there is a permutation $\sigma \in\mathrm{S}_{n}$ such that $\sigma(c_{i})=c_{j}$. The permutation group acts on the indices of the columns. Similarly, if there is a permutation $\tau \in \mathrm{S}_n$ such that $\tau(r_{i})=r_j$, we say that rows $r_i$ and $r_j$ are permutation equivalent. 
\end{definition}
In short, two columns are permutation equivalent if one of them can can be reordered to get the other column. We now describe the quasi-cyclic code for our cryptosystem. We do this by stating conditions on the systematic parity-check matrix $\mathcal{H}$ for a quasi cyclic error-correcting code. The dimension of $\mathcal{H}$ is $m_1d\times m_2d$, where $m_1 < m_2$ are positive integers and $d\geq 2$. From now onward, we assume that $\mathcal{H}$ is of the form $[\mathrm{I}\,|\,C]$ where $\mathrm{I}$ is an identity matrix of size $m_1d$ and $C$ is a matrix with circulant blocks. Each block in $C$, is a circulant matrix $C_{ij}$ where $i=1,2,\ldots,m_1$ and $j=1,2,\ldots,(m_2-m_1)$ and is of size $d\geq 2$. It is well known that $C=\sum_{i,j} E_{i,j}\otimes C_{i,j}$ where $E_{ij}$ is the matrix with $1$ in the $(i, j)$ position and zero everywhere else.
\paragraph{Conditions on the parity-check matrix $\mathcal{H}$}
\begin{itemize}
\item[i)] There is one column in $C$, such that, all the circulants $C_{i,j}$ in that column have at least one unique element in the first row. Equivalently, there is a $j_0$, such that, for all $i$, circulant matrices $C_{i,j_0}$ will have at least one unique element in the first row. However, note that, different circulant blocks can share the same unique element.

\item[ii)] All matrices are defined over a field of characteristic 2. For each $j$ there is at least one $i$ where $C_{i,j}$ contains an element from a proper extension of $\mathbb{F}_2$. This condition is equivalent to saying that each column of $C$ contains at least one element from some non-trivial extension of $\mathbb{F}_2$. 

\item[iii)] Any two rows $r_i,r_j$ in $C$ are permutation equivalent only when $\left\lfloor \dfrac{i}{d} \right\rfloor =\left\lfloor \dfrac{j}{d}\right\rfloor$ where  $ 0 \leq i,j \leq m_1d$.  Similarly, two columns $c_i,c_j$ are permutation equivalent only when $\left\lfloor \dfrac{i}{d} \right\rfloor =\left\lfloor \dfrac{j}{d}\right\rfloor$ where $0 \leq i,j \leq m_2d$ .  

Note that if $\left\lfloor \dfrac{i}{d} \right\rfloor =\left\lfloor \dfrac{j}{d}\right\rfloor$ then rows $r_{i},r_j$ and columns $c_i,c_j$ are permutation equivalent because $C_{i,j}$ is a circulant matrix. So this condition simply says that apart from these permutation equivalences, there are no other permutation equivalence between columns or rows.

\item[iv)] There exist at least one $C_{i,j}$, a circulant matrix of size $d$, such that, the first row of the circulant matrix have two consecutive unique elements. Note that we consider the $d^\text{th}$ entry and the $1^\text{st}$ entry of the first row as consecutive. 
\end{itemize}   
\paragraph{Remark:}

Condition (iii) can be alternatively stated as there exists $a\in\mathbb{F}_{2^{\eta}}$, 
such that, the multiplicity of $a$ in $r_{i}$ is not the same as in $r_{j}$ where $\left\lfloor \dfrac{i}{d} \right\rfloor \neq \left\lfloor \dfrac{j}{d}\right\rfloor$. Equivalence of these two conditions follows directly as two rows are permutation equivalent if and only if one of them can be reordered to the other. Note that this equivalent condition of permutation equivalence is helpful in finding suitable $C$. This is because, if two rows are permutation equivalent, one can just count multiplicities rather that going through all possible permutations. Another way to construct such $C$ is to construct individual circulant matrices $C_{i,j}$ in such a way that each row of the individual circulant matrix has at least one entry different from every other circulant matrix.  

\subsection{From K to H} \label{sktoh}
Recall from Equation~\ref{hidden}, $\mathrm{G}=\mathrm{GL}_{k}(\mathbb{F}_{2})\times\mathrm{S}_n$
and we are trying to solve the hidden subgroup problem in the group $\mathrm{G}^2 \rtimes\mathbb{F}_{2}$. The subgroup in this case is $\mathrm{K}=K_0 \bigcup K_1$ 
where $\mathrm{K}_{0}=\left(\left( \mathrm{H}, s^{-1}\mathrm{H} s\right), 0\right)$ and $\mathrm{K}_{1}=\left(\left(\mathrm{H}s, s^{-1}\mathrm{H}\right), 1\right)$. Note that $\mathrm{H}$ replaces $H_0$ in Equation~\ref{hidden} and $s$ is the shift defined earlier, $K_1$ is not a subgroup and the union is disjoint. 

In this section, we reduce distinguishability of $\mathrm{K}$, i.e., $\mathcal{D}_{\mathrm{K}}$ to the subgroup $\mathrm{H}$. If we directly apply Equation~\ref{KS1.1}, our optimization should be over $\left(\mathrm{GL}_{k}(\mathbb{F}_{2})\times \mathrm{S}_n\right)^{2}\rtimes \mathbb{F}_{2}$. We reduce it to $\mathrm{H}$, a subgroup of $\mathrm{GL}_{k}(\mathbb{F}_{2})\times \mathrm{S}_n$. Then the later bound can be trivially decomposed into individual components $ \mathrm{GL}_{k}(\mathbb{F}_{2})$  and $\mathrm{S}_n$. Apart from getting rid of the $\mathbb{F}_{2}$ component it also serves one more, the most important purpose for further optimization. The bound that we develop are in terms of $\mathrm{H}$. It has no shift term $s$. It is well characterized by the subgroup $\mathrm{H}$ and has connections and structural properties that we exploit for our optimization. 
 
Back to some more notations.
Let $x^{\mathrm{G}}$ denote the conjugacy class of $x$ in $\mathrm{G}$ and $C_{\mathrm{G}}(x)$ denotes the centralizer of $x$ in $\mathrm{G}$. Here the group acts on itself by conjugation. A definition of these concepts are in Alperin and Bell~\cite[Pages 33-34]{bell}. These concepts can also be studied by group action, see Dixon and Mortimer~\cite[Example 1.3.5]{Dixon}.
 
From~\cite[Proposition 1 (2)]{KS2} we know that  
\begin{equation*} \mathcal{D}_{K} \leq \sum_{k\in \mathrm{K}, k\neq e} {\vert k^{\mathrm{G} ^2 \rtimes \mathbb{F}_2}\vert}^{-\frac{1}{2}} 
\leq \sum_{k_0\in \mathrm{K}_0, k_0\neq e} {\vert k_0^{\mathrm{G} ^2 \rtimes \mathbb{F}_2}\vert}^{-\frac{1}{2}} + \sum_{k_1\in \mathrm{K}_1} {\vert k_1^{\mathrm{G} ^2 \rtimes \mathbb{F}_2}\vert}^{-\frac{1}{2}}.
\end{equation*}
Where one $k$ is chosen from each orbit.
Let $S_0$ be the sum over $K_0$ and $S_1$ be the sum over $K_1$ in the above expression.

We present an upper bound for $\mathcal{D}_{K}$ by restricting our attention to $S_0$ and $S_1$. First we start with $S_1$. By the orbit-stabilizer property, $S_1$ can be rewritten as

\begin{equation}
 S_1 = \sum_{k_1\in \mathrm{K}_1}  \dfrac{\vert C_{\mathrm{G} ^2 \rtimes \mathbb{F}_2}(k_1)\vert ^\frac{1}{2}}{\vert \mathrm{G} ^2 \rtimes \mathbb{F}_2 \vert ^\frac{1}{2}}\leq \vert \mathrm{K}_1 \vert \ \dfrac{ \max\limits_{k_1\in\mathrm{K}_1}\vert C_{\mathrm{G} ^2 \rtimes \mathbb{F}_2}(k_1) \vert ^\frac{1} {2} }{ \vert \mathrm{G} ^2 \rtimes \mathbb{F}_2 \vert ^ \frac{1}{2}}. 
\end{equation}
Now we compute an upper bound of $C_{\mathrm{G} ^2 \rtimes \mathbb{F}_2}(k_1)$. Define two sets $\mathrm{G}_0$ and $\mathrm{G}_1$ as follows: 
 
 $$\mathrm{G}_{0} =\lbrace \left(\left(A,P\right)\left(A^\prime,P^\prime\right) ,0\right) \ :\ A,A^\prime\in \mathrm{GL}_{k}(\mathbb{F}_2); P,P^\prime \in \mathrm{S}_n\rbrace$$ and 
 $$\mathrm{G}_{1} =\lbrace \left(\left(A,P\right)\left(A^\prime,P^\prime\right),1\right)\ :\  A,A^\prime\in \mathrm{GL}_{k}(\mathbb{F}_2); P,P^\prime \in \mathrm{S}_n\rbrace.$$
Clearly $G^2\rtimes\mathbb{F}_2$ is the disjoint union of $\mathrm{G}_0$ and $\mathrm{G}_1$.
Then  $\vert C_{\mathrm{G} ^2 \rtimes \mathbb{F}_2}(k_1) \vert= \vert C_{\mathrm{G}_0}(k_1)\vert + \vert C_{\mathrm{G}_1}(k_1)\vert$; where $C_{\mathrm{G_1}}(k_1)=C_{\mathrm{G}}(k_1)\cap\mathrm{G}_1$ and $C_{\mathrm{G_0}}(k_1)=C_{\mathrm{G}}(k_1)\cap\mathrm{G}_0$. Define $g_0= \left(\left(A,P \right),\left(A^\prime,P^\prime\right) ,0\right)$. If $g_0$ is an element of $ C_{\mathrm{G}_0}(k_1)$ with $k_1= \left(\left(h_1s,s^{-1}h_2\right),1\right)$ it then follows that
$h_1s\left(A^\prime, P^\prime\right)=\left(A, P \right)h_1s$ where $h_1,h_2\in \mathrm{H}$. This is a simple calculation using the fact that $k_1$ commutes with $g_0$.

For each $\left(A ,P \right)$ and $h_1$, there is only one choice available for $\left(A^\prime,P^\prime\right)$. Thus 
\begin{equation}
     \vert C_{\mathrm{G}_0}(k_1)\vert \leq \vert \mathrm{H}\vert \vert \mathrm{GL}_k\left(\mathbb{F}_2\right) \times \mathrm{S}_n\vert.  \label{eqn:k1_x0} \end{equation}
Similarly, if we define $g_1=\left(\left(A,P\right),\left(A^\prime,P^\prime\right),1\right)$ and take into consideration that $k_1$ and $g_1$ commute we get that $h_1s\left(A^\prime,P^\prime\right)=\left( A ,P \right)s^{-1}h_2$. Similar arguments as above shows  
\begin{equation} 
 \vert C_{\mathrm{G}_1}(k_1)\vert \leq \vert \mathrm{H}\vert^{2}\vert \mathrm{GL}_k\left(\mathbb{F}_2\right) \times \mathrm{S}_n\vert.  \label{eqn:k1_x1} \end{equation}
Combining Equations~\ref{eqn:k1_x0} and~\ref{eqn:k1_x1}, we get 
$\vert C_{\mathrm{G} ^2 \rtimes \mathbb{F}_2}(k_1)\vert\leq  \left(\vert {H}\vert^{2}  + \vert \mathrm{H}\vert\right)\vert\mathrm{GL}_k\left(\mathbb{F}_2\right) \times \mathrm{S}_n\vert $.
Putting together above calculations along with $\vert \mathrm{K}_1 \vert = \vert \mathrm{H}\vert ^2$, we get
\begin{equation}
S_1\leq \vert\mathrm{H}\vert^{2}\left(\dfrac{\left(\vert \mathrm{H}\vert^{2}+\vert \mathrm{H}\vert\right)^\frac{1}{2}}{ \vert \mathrm{GL}_k\left(\mathbb{F}_2\right) \times \mathrm{S}_n\vert ^\frac{1}{2}}\right). \label{eq:S_1 upper  bound} \end{equation}  

We now look at $S_0$. From similar computation with $g_0 \in C_{\mathrm{G}_0}(k_0)$  and $k_0=\left(\left(h_1,s^{-1}h_2s\right), 0 \right)$ where $h_1,h_2\in\mathrm{H}$ we get the following two conditions:
\begin{itemize}
\item[i)] $h_1(A,P)=(A,P)h_1$ which implies that $\left( A,P\right) \in C_{\mathrm{G} }(h_1)$ 

\item[ii)] $s^{-1}h_2s(A^\prime,P^\prime)=\left(A^\prime,P^\prime \right)s^{-1}h_2s$ which implies that $s\left(A^\prime,P^\prime \right)s^{-1} \in C_{\mathrm{G}}(h_2)$.
\end{itemize}
Thus $C_{\mathrm{G}_0}(k_0)\leq\vert C_{\mathrm{G}}(h_1)\vert\vert C_{\mathrm{G}}(h_2)\vert$. Hence
\begin{equation}
    \dfrac{\vert C_{\mathrm{G}_0}(k_0)\vert}{\vert \mathrm{G}^2 \rtimes \mathbb{F}_2\vert} \leq \dfrac{\vert C_{\mathrm{G}}(h_1) \vert\vert C_{\mathrm{G}}(h_2)\vert}{\vert\mathrm{G}^2 \rtimes \mathbb{F}_2\vert} \leq \dfrac{\vert \mathrm{G} \vert \min( \vert C_{\mathrm{G}}(h_1) \vert, \vert C_{\mathrm{G}}(h_2) \vert )}{\vert \mathrm{G}^2 \rtimes \mathbb{F}_2\vert}. \label{eqn:k0x0}
\end{equation}
Now, let us assume that $g_1=\left((A,P),(A^\prime,P^\prime),1\right)\in C_{G_1}(k_0)$ where $k_0=\left(h_1,s^{-1}h_2s),0\right)$. Then $k_0g_1=\left(h_1(A,P),s^{-1}h_2
s(A^\prime,P^\prime),1\right)$ and $k_0^{-1}=(h_1^{-1},s^{-1}h_2^{-1}s,0)$. Now notice that $k_0g_1k_0^{-1}=\left(h_1(A,P)s^{-1}h_2^{-1}s,s^{-1}h_2s(A^\prime,P^\prime)h_1^{-1},1\right)$. Since $k_0$ and $g_1$ commute we get that
\begin{eqnarray*}
(A,P)&=&h_1(A,P)s^{-1}h_2^{-1}s\\
(A^\prime,P^\prime) & = & s^{-1}h_2s(A^\prime, P^\prime)h_1^{-1}
\end{eqnarray*} implying that $s^{-1}h_2s=(A^\prime,P^\prime)h_1(A^\prime,P^\prime)^{-1}$. Substituting in the above equation we get that $(A,P)=h_1(A,P)(A^\prime,P^\prime)h_1^{-1}(A^\prime,P^\prime)^{-1}$ or $(A,P)(A^\prime,P^\prime) = h_1(A,P)(A^\prime,P^\prime)h_1^{-1}$. This shows that 
\[(A,P)(A^\prime, P^\prime) \in C_{\mathrm{G}}(h_1).\]

Extracting $h_1$ from above and substituting into the other equation we see that $(A,P)=(A^\prime,P^\prime)^{-1}s^{-1}h_2s(A^\prime,P^\prime)(A,P)s^{-1}h_2^{-1}s$. This implies \[s(A^\prime,P^\prime)(A,P)s^{-1}\in C_\mathrm{G}(h_2).\]
Note that $h_1$ and $h_2$ are fixed. Then for each choice of $(A,P)$ there is at most $|C_{\mathrm{G}}(h_1)|$ choices of $(A^\prime, P^\prime)$. There are at most $|G|$ choices of $(A,P)$. Thus the total number of choices for $g_1$ is $|G||C_{\mathrm{G}}(h_1)|$.
These arguments remain valid when we switch $(A,P)$ with $(A^\prime,P^\prime)$ and $C_{\mathrm{G}}(h_1)$ with $C_{\mathrm{G}}(h_2)$. 

Thus $\vert C_{\mathrm{G}_1}(k_0) \vert\leq\min\left( C_{\mathrm{G}}(h_1),C_{\mathrm{G}}(h_2) \right)\vert \mathrm{G} \vert$ and hence
\begin{equation}
    \dfrac{\vert C_{\mathrm{G}_1}(k_0)\vert}{\vert \mathrm{G}^2 \rtimes \mathbb{F}_2\vert} \leq \dfrac{\vert \mathrm{G} \vert \min( \vert C_{\mathrm{G}}(h_1) \vert, \vert C_{\mathrm{G}}(h_2) \vert )}{\vert \mathrm{G}^2 \rtimes \mathbb{F}_2\vert}.
\label{eqn:k0x1}\end{equation}
Combining Equations~\ref{eqn:k0x0} and~\ref{eqn:k0x1}, we get 
\[ \dfrac{\vert C_{\mathrm{G}^2\rtimes \mathbb{F}_2}(k_0)\vert}{\vert \mathrm{G}^2 \rtimes \mathbb{F}_2\vert} = \dfrac{\vert C_{\mathrm{G}_0}(k_0)\vert}{\vert \mathrm{G}^2 \rtimes \mathbb{F}_2\vert}+\dfrac{\vert C_{\mathrm{G}_1}(k_0)\vert}{\vert \mathrm{G}^2 \rtimes \mathbb{F}_2\vert} \leq \dfrac{ \min( \vert C_{\mathrm{G}}(h_1) \vert, \vert C_{\mathrm{G}}(h_2) \vert )}{\vert \mathrm{G} \vert}.\]
Thus, 
\begin{eqnarray*}
S_0 &= &\sum_{k_0 \neq e} \left(\dfrac{\vert C_{\mathrm{G}^2\rtimes \mathbb{F}_2}(k_0)\vert}{\vert \mathrm{G}^2 \rtimes \mathbb{F}_2\vert}\right)^\frac{1}{2}\\
 &\leq &\sum_{(h_1,h_2)\neq (e,e)}\left( \dfrac{ \min( \vert C_{\mathrm{G}}(h_1) \vert, \vert C_{\mathrm{G}}(h_2) \vert )}{\vert \mathrm{G} \vert}\right)^\frac{1}{2}\\ 
&\leq & \sum_{h\in\mathrm{H}\setminus e} \vert \mathrm{H} \vert \left( \dfrac{\vert C_{G }(h)\vert}{\vert \mathrm{G} \vert}\right)^\frac{1}{2}.
\end{eqnarray*}
Again, from the orbit-stabilizer theorem,
 \begin{equation} 
 S_0 \leq \vert \mathrm{H} \vert \sum_{ h\in H\setminus e} {\vert h^{\mathrm{G}}\vert}^{-\frac{1}{2}}
     \label{eq:S_0 upper  bound}
 \end{equation}
and thus we have achieved our goal for this section of writing $\mathcal{D}_\mathrm{K}$ in terms $\mathrm{H}$. In particular, this can be done by putting multiplicative overhead for $\vert \mathrm{H}\vert$ and an additive term given by Equation~\ref{eq:S_1 upper  bound}. Thus
\begin{equation} \mathcal{D}_{\mathrm{K} } \leq \vert \mathrm{H} \vert \sum_{ h\in H\setminus e} {\vert h^{\mathrm{G}}\vert}^{-\frac{1}{2}} +  \vert\mathrm{H}\vert^{2}\left(\dfrac{\left(\vert \mathrm{H}\vert^{2}+\vert \mathrm{H}\vert\right)^\frac{1}{2}}{ \vert \mathrm{GL}_k\left(\mathbb{F}_2\right) \times \mathrm{S}_n\vert ^\frac{1}{2}}\right).
    \label{k to h}
\end{equation}
\section{Size and minimal degree of $\mathrm{H}$} \label{sec:size}
Note that in the previous section, we have boiled down the indistinguishability of $\mathrm{K}$ to conjugacy classes of $\mathrm{H}$. Similar to the work of Kempe and Shalev~\cite{Kempe}, minimal degree and size of the subgroup play an important role in showing indistinguishability of $\mathrm{K}$. In this section, we give an upper bound on the size of $\mathrm{H}$ and determine their minimal degrees. 
Before that, we recall some well known definitions.

\begin{definition} For any group $\mathrm{G}\leq \mathrm{G}_{1} \times \mathrm{G}_{2}$ we define $\Pi^{i}(\mathrm{G})$ as a projection of the group $\mathrm{G}$ on $\mathrm{G}_i$ for $i=1,2$. 
\end{definition}
\begin{definition} Let $M_{k,n}$ be the set of $k \times n$ matrix. Then there is a natural group action of $\mathrm{S}_k \times \mathrm{S}_n$ on $M_{k,n}$ given by $(P_1,P_2)M=P_1^{-1}MP_2$. Let $\mathrm{Stab}(C)$ be the stabilizer of $C$. Furthermore, $T_C :=\Pi^{1}(\mathrm{Stab}(C))$.
 \end{definition}
\noindent The main theorem for this section is the following:
 \begin{theorem} Let $\mathcal{H}$ be a parity check matrix for a code that satisfies conditions in Section~\ref{cond_qcodes}. Recall that we define $H=\left\{(A,P)\in\mathrm{GL}_k(\mathbb{F}_2)\times\mathrm{S}_n\,:\,A^{-1}\mathcal{H}P=\mathcal{H}\right\}$. Then the following is true.
 \begin{itemize}
 \item[(i)] $\vert \mathrm{H}\vert \leq d$
 \item[(ii)] The minimal degree of $\Pi^{1}(\mathrm{H}) \geq d$ and the minimal degree of $\Pi^{2}(\mathrm{H}) \geq d$.
 \end{itemize}
 \end{theorem}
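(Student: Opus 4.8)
The plan is to identify $\mathrm{H}$ with the combinatorial stabilizer of the block matrix $C$ and then to pin that stabilizer down block-by-block using the circulant structure together with Burnside's theorem on transitive groups of prime degree. First I would show that for any $(A,P)\in\mathrm{H}$ the matrix $A$ is in fact a permutation matrix. Rewriting $A^{-1}\mathcal{H}P=\mathcal{H}$ as $A\mathcal{H}=\mathcal{H}P$ and using $\mathcal{H}=[\mathrm{I}\,|\,C]$, the first $k$ columns of $\mathcal{H}P$ are precisely the columns of $A$; these are vectors over $\mathbb{F}_2$, whereas by condition (ii) every column of $C$ carries an entry from a proper extension of $\mathbb{F}_2$. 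Hence no column of $A$ can be paired with a column of $C$, so each column of $A$ lies in $\{e_1,\dots,e_k\}$, and since $A$ is invertible these columns are exactly $e_1,\dots,e_k$. Thus $A$ is a permutation matrix, $P$ stabilizes the first $k$ coordinates so that $P=\mathrm{diag}(P_1,P_2)$ with $A=P_1$, and the defining relation reduces to $P_1C=CP_2$. Consequently $\mathrm{H}\le\mathrm{S}_k\times\mathrm{S}_n$, $\mathrm{H}\cong\mathrm{Stab}(C)$, and $\Pi^1(\mathrm{H})=T_C$.

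Next I would use condition (iii) to split $P_1$ and $P_2$ along the blocks. Any element of the stabilizer carries each row of $C$ to a permutation-equivalent row, so (iii) forces $P_1$ to fix each block of $p$ rows setwise and $P_2$ to fix each block of $p$ columns; write $P_1=(\sigma_1,\dots,\sigma_{m_1})$ and $P_2=(\tau_1,\dots,\tau_{m_2-m_1})$ with $\sigma_a,\tau_b\in\mathrm{S}_p$. Comparing blocks, $P_1C=CP_2$ becomes $\sigma_aC_{ab}=C_{ab}\tau_b$ for every $a,b$, that is, each pair $(\sigma_a,\tau_b)$ is a symmetry of the $\mathbb{Z}_p$-colouring $(i,j)\mapsto c^{(ab)}_{j-i}$ given by the circulant $C_{ab}$. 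The global cyclic shift always satisfies these relations, so the symmetry group of each block is a transitive subgroup of $\mathrm{S}_p$ of prime degree.

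The heart of the proof, and the step I expect to be the main obstacle, is to show that every $\sigma_a$ and $\tau_b$ is an affine map $x\mapsto\lambda x+\mu$ of $\mathbb{Z}_p$. By Burnside's theorem a transitive group of prime degree $p$ is either $2$-transitive or contained in the affine group $\mathrm{AGL}_1(\mathbb{F}_p)$. A $2$-transitive symmetry group would force all off-diagonal entries of the block to coincide, i.e. its first row would be constant or constant except in one position; these are exactly the shapes excluded by conditions (i) and (iv). Condition (iv) hands us a block $C_{a_0b_0}$ of neither shape, so $(\sigma_{a_0},\tau_{b_0})$ must be affine, say with multiplier $\lambda$. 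I would then propagate affineness across the whole array: if $C_{ab}$ is non-constant (true for every block by condition (i)) and one of $\sigma_a,\tau_b$ is affine with multiplier $\lambda$, then $\sigma_aC_{ab}=C_{ab}\tau_b$ forces the other to be affine with the same multiplier, because a non-constant sequence of prime length has, for each fixed multiplier, a unique shift preserving it. Since $C$ is a full $m_1\times(m_2-m_1)$ grid of non-constant blocks, affineness spreads from $(a_0,b_0)$ to all $\sigma_a$ and $\tau_b$, with one common multiplier $\lambda$. Verifying the Burnside dichotomy in the bipartite (row versus column) setting, and checking that the forced shape really lands in the families (i) and (iv), is where the care is needed.

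Both conclusions then fall out. For (ii), a non-identity affine map $x\mapsto\lambda x+\mu$ of $\mathbb{Z}_p$ is either a non-trivial translation, moving all $p$ points, or has a unique fixed point, moving $p-1$ points; hence every non-identity element of $\Pi^1(\mathrm{H})$ and of $\Pi^2(\mathrm{H})$ moves at least $p-1$ coordinates, giving minimal degree at least $p-1$. For (i), once the common multiplier $\lambda\in\mathbb{F}_p^\times$ and the shift of the first block-row are fixed, the relations $\sigma_aC_{ab}=C_{ab}\tau_b$ determine every remaining shift uniquely; an element of $\mathrm{H}$ is therefore determined by a pair in $\mathbb{F}_p^\times\times\mathbb{F}_p$, so $|\mathrm{H}|\le p(p-1)\le p^2$.
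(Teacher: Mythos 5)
Your opening reduction (showing $A$ is a permutation matrix, $P = P_1^{-1}\oplus P_2$, $\mathrm{H}\cong\mathrm{Stab}(C)$, $T_C=\Pi^{1}(\mathrm{H})$), the block-diagonal splitting forced by condition (iii), and your closing counts (an affine group has order $p(p-1)$, and its non-identity elements fix at most one point) all coincide with the paper's argument. The genuine gap is the step you yourself flag as the heart: the claim that if the symmetry group of a block is $2$-transitive, then its first row must be constant or constant except in one position. This is false. What $2$-transitivity buys you by elementary means is only that any two distinct rows of the block agree in the same number of columns (a symmetry carrying one pair of rows to another carries the agreement set along via its column permutation); for a circulant this says exactly that each level set of the first row is a difference set in $\mathbb{Z}_p$, and nontrivial difference sets exist. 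Concretely, let $p=(q^{d}-1)/(q-1)$ be prime with $d\geq 3$, e.g.\ $p=31$, $q=2$, $d=5$, and let $C^{\prime}$ be the point--hyperplane incidence matrix of $\mathrm{PG}(d-1,q)$ written with respect to a Singer cycle. This is a $31\times 31$ circulant of prime order $>30$ whose first row has $15$ ones and $16$ zeros, so it is neither constant nor constant-except-one and hence is eligible as the distinguished block in condition (iv); yet its stabilizer contains $\mathrm{PSL}_{5}(2)$ acting $2$-transitively on rows, so $T_{C^{\prime}}$ is emphatically not contained in $\mathrm{AGL}(\mathbb{F}_p)$. Hence Burnside's dichotomy plus the shape conditions cannot force affineness of the distinguished block, and everything downstream (the common multiplier, the propagation across the grid, the bound $|\mathrm{H}|\leq p(p-1)$, and the minimal degree bound) loses its foundation.

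This failure is precisely why the paper does not stop at Burnside but invokes the CFSG-based classification of transitive groups of prime degree: $S_p$ or $A_p$; a subgroup of the affine group; $M_{11}$ or $M_{23}$; or a projective group $G$ with $\mathrm{PSL}_{d}(q)\leq G\leq \mathrm{P}\Gamma\mathrm{L}_{d}(q)$. It then eliminates $S_p$ and $A_p$ by a multiplicity count rather than by $2$-transitivity (conditions (i) and (iv) force the orbit of the first column under the $\mathbb{F}_p\times S_p$ action to have size at least $3p$, whence $\vert T_{C^{\prime}}\vert\leq p!/3$), and eliminates the Mathieu and projective cases separately, using condition (v) together with the order-$p$ shift lying in $T_{C^{\prime}}$; that last, sporadic-and-projective, step is the genuinely delicate one, and it is exactly where your counterexample class lives. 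So your proposal is not an elementary simplification of the paper's proof; it runs into the very obstruction that makes the appeal to the classification of finite simple groups necessary. To repair it you would have to either add a hypothesis excluding difference-set circulants (such as the Singer examples above) or reinstate the CFSG case analysis.
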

\noindent To prove the above theorem, we need a key lemma. We prove that next.
\begin{lemmaA} Let $\mathcal{H}$ be a parity-check matrix such that it satisfies conditions from Section~\ref{cond_qcodes} then
$T_C \hookrightarrow S_d \times S_{d} \times \cdots \times S_d \times \langle(1,2,\ldots,d)\rangle \times S_d \times \cdots\times S_d$. The direct product is taken over $m_1-1$ terms of $S_d$ and one term of the cyclic group of size $d$.
\end{lemmaA}
This establishes upper and lower bounds on the size and the minimal degree of $T_C$. Later, we will translate this to that of $\mathrm{H}$. 
\begin{lemma} \label{S_k in}
 Let $(A,P) \in  \mathrm{H}$ then 
 \begin{itemize}
\item[] $A = P_1 $  
\item[] $P = A^{-1} \oplus P_2 = P_1^{-1} \oplus P_2$.
\end{itemize} 
 where $P_1 \in \mathrm{S}_k$ and $P_2 \in S_{n-k}$. Moreover, $P_1CP_2=C$ and for each $P_1$ there is an unique $P_2$. It then follows, $T_C=\Pi^{1}(\mathrm{H})$ and $\vert T_C\vert =\vert \mathrm{H}\vert$. 
 \end{lemma}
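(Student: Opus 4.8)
The plan is to exploit the systematic form $\mathcal{H}=[\mathrm{I}\,|\,C]$ together with the field-membership dichotomy guaranteed by the conditions of Section~\ref{cond_qcodes}. Fix $(A,P)\in\mathrm{H}$, so that $A^{-1}\mathcal{H}P=\mathcal{H}$, equivalently $\mathcal{H}P=A\mathcal{H}=[A\,|\,AC]$. The first observation is that the first $k$ columns of $A\mathcal{H}$ are the columns of $A$, and since $A\in\mathrm{GL}_k(\mathbb{F}_2)$ these are $0$--$1$ vectors, i.e.\ all their entries lie in $\mathbb{F}_2$. On the other hand $\mathcal{H}P$ is a column permutation of $\mathcal{H}=[\mathrm{I}\,|\,C]$, and the columns of $\mathcal{H}$ whose entries all lie in $\mathbb{F}_2$ are exactly the $k$ identity columns: by condition~(ii) every column of $C$ contains an entry from a proper extension $\mathbb{F}_{2^{\eta}}\setminus\mathbb{F}_2$, so no column of $C$ is a $0$--$1$ vector. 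Hence $P$ must carry the first $k$ indices onto themselves, and consequently the last $n-k$ indices onto themselves; that is, $P$ is block diagonal, $P=Q_1\oplus P_2$ with $Q_1\in\mathrm{S}_k$ and $P_2\in\mathrm{S}_{n-k}$.

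Substituting this block form gives $\mathcal{H}P=[\mathrm{I}\,|\,C](Q_1\oplus P_2)=[Q_1\,|\,CP_2]$, and comparing with $[A\,|\,AC]$ forces $A=Q_1$ and $CP_2=AC$. In particular $A$ is itself a permutation matrix, so I set $P_1:=A\in\mathrm{S}_k$; the relation $CP_2=AC$ then reads $P_1^{-1}CP_2=C$, i.e.\ $(P_1,P_2)\in\mathrm{Stab}(C)$ (the inverse placement here is just the convention fixed in the definition of the $\mathrm{S}_k\times\mathrm{S}_n$ action). The next step is uniqueness of $P_2$ given $P_1$: if $(A,Q_1\oplus P_2)$ and $(A,Q_1\oplus P_2')$ both lie in $\mathrm{H}$, then $CP_2=CP_2'$, so $P_2(P_2')^{-1}$ permutes the columns of $C$ while fixing $C$. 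I would then argue that $C$ has pairwise distinct columns, whence $P_2=P_2'$. Distinctness follows from the conditions: two columns lying in different circulant block-columns cannot even be permutation equivalent by condition~(iii), while two distinct columns inside a single block-column are cyclic shifts of one another, and these coincide only if the generating vector is invariant under a nonzero cyclic shift; as $p$ is prime this would make the circulant constant, contradicting condition~(i).

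With these two facts in hand the final identities are immediate. Every $A\in\Pi^{1}(\mathrm{H})$ arises as $A=P_1$ with $(P_1,P_2)\in\mathrm{Stab}(C)$, and conversely each $(P_1,P_2)\in\mathrm{Stab}(C)$ produces an element of $\mathrm{H}$ whose first coordinate is $P_1$; thus $\Pi^{1}(\mathrm{H})=\Pi^{1}(\mathrm{Stab}(C))=T_C$. Moreover, uniqueness of $P_2$ makes the projection $(A,P)\mapsto A$ a bijection from $\mathrm{H}$ onto $T_C$, giving $\vert\mathrm{H}\vert=\vert T_C\vert$. I expect the main obstacle to be the argument of the first paragraph, namely the structural rigidity that forces the scrambler $A$ to be an honest permutation matrix rather than a general element of $\mathrm{GL}_k(\mathbb{F}_2)$: everything downstream is bookkeeping once the field-membership dichotomy of condition~(ii) is set up correctly, with the distinctness step of the second paragraph (resting on conditions~(i) and~(iii)) being the only other place where genuine care is needed.
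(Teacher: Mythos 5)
Your proof is correct and follows essentially the same route as the paper: condition (ii) forces $P$ to be block diagonal and $A$ to be a permutation matrix, after which the stabilizer relation and the uniqueness of $P_2$ follow from the pairwise distinctness of the columns of $C$. The only differences are cosmetic: you justify column-distinctness inline via conditions (i) and (iii) (the paper asserts it and defers the justification to a later lemma), and your placement of inverses ($P = A\oplus P_2$, $P_1^{-1}CP_2 = C$) faithfully follows the paper's stated definition $A^{-1}\mathcal{H}P=\mathcal{H}$, whereas the lemma as written tacitly uses the convention $A\mathcal{H}P=\mathcal{H}$ --- a discrepancy internal to the paper, not a gap in your argument.
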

\begin{proof} 
Let $(A,P) \in \mathrm{H}$ then by definition we have \[[\mathrm{I}|C]= A [\mathrm{I}|C] P= [A|AC]P. \]

Since action of right multiplication by $P$ is equivalent to reordering of columns we infer that $[A|AC]$ and $[\mathrm{I}|C]$ have same columns possibly reordered. By construction (in particular condition (ii) in Section~\ref{cond_qcodes}), $C$ and the identity matrix $\mathrm{I}$ have no common columns as every column of $C$ contains an element from proper extension. As $C$ and $\mathrm{I}$ have distinct columns; $A$ should have same columns as the identity matrix $\mathrm{I}$. So by the action of multiplication by $P$ first $k$ columns must go to themselves, in other words, first $k$ columns make up a permutation matrix of size $k$.  Hence $P$ is a block diagonal matrix, having a block of size $k$ and $n-k$ where each of the blocks is a permutation matrix of size $k$ and $n-k$ respectively, we get $P= \sigma_k \oplus \sigma_{n-k}$. Now $A\sigma_k =\mathrm{I}$  gives  $A=P_1$ and $P = A^{-1} \oplus P_2$ where $P_1 = {\sigma_k}^{-1}$ and $P_2=\sigma_{n-k}$. It is easy to see from the fact $\mathcal{H}=[I\,|\,C]$ that $P_1CP_2=C$. 

Clearly, $T_C= \Pi^{1}(\mathrm{H})$ as $T_C$ being a subgroup, it is closed under inverse. Now, to show uniqueness of $P_2$ for every $P_1$, it suffices to prove for every $P_1$ there is at most one $P_2$. This follows from $P_1CP_2=C$ because no two columns of $C$ are identical and so no two columns of $P_1 C$ are identical. Now $P_2$ should reorder the columns to give back $C$ which can be done at most in one way. Hence, for every $P_1$ there is at most one corresponding $P_2$. 
\end{proof}    
Now we move to find an upper bound for the size of $\mathrm{H}$ by embedding it into direct product of $m_1$ full symmetric groups and the cyclic group $\langle(1,2,\ldots,d)\rangle$.
\begin{lemma} Let $P,Q$ be permutation matrices such that $PCQ=C$, then $P= \sum_{i}E_{i,i} \otimes P_{i}$ and $Q=\sum_{j}E_{j,j}\otimes Q_{j}$ where $P_{i}$ and $Q_{j}$ are permutation matrices of size $d$.
\end{lemma}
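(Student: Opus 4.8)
The plan is to read off, directly from the defining equation $P^{-1}CQ = C$ of $\mathrm{Stab}(C)$, that the row permutation underlying $P$ and the column permutation underlying $Q$ must each preserve the partition of rows (respectively columns) of $C$ into consecutive blocks of size $p$, and then to invoke condition (iii) of Section~\ref{cond_qcodes} to forbid any mixing across distinct blocks. First I would unpack the matrix identity. Writing $P$ and $Q$ as the permutation matrices of a row permutation $\pi\in\mathrm{S}_{m_1p}$ and a column permutation $\sigma\in\mathrm{S}_{(m_2-m_1)p}$, the relation $P^{-1}CQ=C$, equivalently $CQ=PC$, asserts that permuting the rows of $C$ by $\pi$ agrees with permuting the columns of $C$ by $\sigma$. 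Comparing the two sides along a fixed row, the row $r_{i'}$ of $C$ with $i'=\pi(i)$ equals $r_i$ with its entries reordered by $\sigma$; in particular $r_i$ and $r_{\pi(i)}$ carry the same multiset of entries, so they are permutation equivalent in the sense of the definition (the single permutation $\sigma$ realizes the reordering, so a reordering certainly exists).

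Next I would apply condition (iii). Permutation equivalence of $r_i$ and $r_{\pi(i)}$ forces $\lfloor i/p\rfloor=\lfloor \pi(i)/p\rfloor$, so $\pi$ sends every index to one in the same block of $p$ consecutive rows; hence $\pi$ fixes each block setwise and acts inside it as a permutation of $p$ symbols. Translating back to matrices, $P$ is block diagonal with $m_1$ diagonal blocks $P_1,\dots,P_{m_1}$, each a $p\times p$ permutation matrix, which is exactly $P=\sum_i E_{i,i}\otimes P_i$. Running the symmetric argument on columns—reading $CQ=PC$ down a fixed column so that $c_j$ and $c_{\sigma(j)}$ are permutation equivalent, and using the column half of condition (iii)—gives $\lfloor j/p\rfloor=\lfloor \sigma(j)/p\rfloor$ and hence $Q=\sum_j E_{j,j}\otimes Q_j$ with each $Q_j$ a $p\times p$ permutation matrix.

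The only point that needs care is the bookkeeping in the first step: one must verify that the equality $CQ=PC$, read across a fixed row, genuinely yields a reordering of that row's entries rather than a combination of entries from different rows, so that the permutation-equivalence hypothesis of condition (iii) truly applies. Once the convention for how $\pi$ and $\sigma$ act on indices is fixed, this is a routine entrywise comparison and presents no real obstacle; the entire substance of the lemma is carried by condition (iii), which was designed precisely to exclude cross-block permutation equivalences among the rows and columns of $C$.
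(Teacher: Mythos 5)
Your proposal is correct and follows essentially the same route as the paper: both arguments extract from the stabilizer relation $PC = CQ$ that each row (resp.\ column) of $C$ must be permutation equivalent to its image under the underlying permutation, and then invoke condition (iii) to rule out any cross-block mixing, forcing block-diagonal form with $p\times p$ permutation blocks. The only cosmetic differences are that the paper argues by contradiction and writes out only the $Q$ half (declaring $P$ symmetric), whereas you argue directly and treat both halves explicitly.
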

\begin{proof}
Note that the lemma simply says that all $P$ and $Q$ are block diagonal permutation matrices with blocks of size $d$. We prove the decomposition of $Q$, a similar result for $P$ can be achieved by similar arguments.

Suppose there exists a $Q \in \Pi^{2}(\mathrm{Stab}(C))$ that can not be decomposed into the block diagonal form. Then there is some $i$,$j$ such that $Q(i)=j$ and $\left\lfloor \dfrac{j}{d}  \right\rfloor \neq \left\lfloor  \dfrac{i}{d} \right\rfloor$ (corresponding to off block diagonal entry at $P_2(i,j)$). Now by condition (iii) on $C$, $c_i$ and $c_j$ are not permutation equivalent. And thus $i^{th}$ column of $CQ$ and $i^{th}$ column of $C$ are not permutation equivalent. Thus for any $P$, the $i^{th}$ column of $PCQ$ can not be equal to $i^{th}$ column of $C$. Thus for any permutation $P$, $PCQ \neq C$. Which leads to a contradiction.   
\end{proof}

\begin{lemma} {\label{Tc break}} The group  $T_C \hookrightarrow T_{C_{1r_1}} \times T_{C_{2r_2}} \times T_{C_{3r_3}}\times \cdots \times T_{C_{m_1r_{m_1}}}$ for all $r_i \in \lbrace{1,2,\ldots,m_2-m_1 \rbrace}$. 
\end{lemma}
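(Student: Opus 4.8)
The plan is to exhibit the embedding explicitly: I would send an element of $T_C$ to the tuple of its diagonal blocks and then check that this assignment is a well-defined injective homomorphism into the stated product. First I would recall, from the preceding lemma, that every $P\in T_C=\Pi^{1}(\mathrm{Stab}(C))$ is block diagonal, $P=\sum_{i}E_{i,i}\otimes P_i$ with each $P_i\in\mathrm{S}_p$, and that $P$ comes paired with some $Q=\sum_{j}E_{j,j}\otimes Q_j\in\Pi^{2}(\mathrm{Stab}(C))$ satisfying $(P,Q)\in\mathrm{Stab}(C)$, i.e.\ $P^{-1}CQ=C$. Writing $C=\sum_{i,j}E_{i,j}\otimes C_{i,j}$ and reading off the $(i,j)$ block of $P^{-1}CQ=C$ (using that $P^{-1}$ and $Q$ are block diagonal) yields the block-wise identities
\[
P_i^{-1}\,C_{i,j}\,Q_j = C_{i,j}\qquad\text{for all } i,j.
\]

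Next, I would fix an arbitrary tuple of column indices $r_1,\ldots,r_{m_1}\in\{1,\ldots,m_2-m_1\}$ and define
\[
\phi\colon T_C\longrightarrow T_{C_{1r_1}}\times\cdots\times T_{C_{m_1 r_{m_1}}},\qquad \phi(P)=(P_1,\ldots,P_{m_1}).
\]
For well-definedness I would specialize the block identity to $j=r_i$, obtaining $P_i^{-1}C_{i,r_i}Q_{r_i}=C_{i,r_i}$; this shows $(P_i,Q_{r_i})\in\mathrm{Stab}(C_{i,r_i})$, hence $P_i\in\Pi^{1}(\mathrm{Stab}(C_{i,r_i}))=T_{C_{i,r_i}}$. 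Thus every coordinate of $\phi(P)$ lands in the correct factor, for any choice of the $r_i$.

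Finally, $\phi$ is a homomorphism because block diagonal matrices multiply coordinate-wise, so the $i$-th block of a product is the product of the $i$-th blocks; and $\phi$ is injective because a block diagonal matrix is determined by its diagonal blocks, so $\phi(P)=(\mathrm{id},\ldots,\mathrm{id})$ forces $P=\mathrm{id}$. Together these give the claimed embedding. The one step that carries the real content — and the only place where the hypotheses on $C$ enter, via the preceding lemma that forces block diagonality — is the well-definedness check: one must verify that the single global witness $Q$ attached to $P$ restricts, column-block by column-block, to a permutation $Q_{r_i}$ that certifies $P_i\in T_{C_{i r_i}}$. The homomorphism and injectivity properties are then purely formal consequences of the block diagonal structure.
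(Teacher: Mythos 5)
Your proposal is correct and follows essentially the same route as the paper's own proof: both use the block-diagonality of $P$ and $Q$ from the preceding lemma, extract the block-wise identity $P_i\,C_{i,j}\,Q_j=C_{i,j}$ (the paper does this via the explicit Kronecker-product computation $PCQ=\sum_{i,j}E_{i,j}\otimes P_iC_{i,j}Q_j$), and then take the canonical map $P\mapsto(P_1,\ldots,P_{m_1})$. Your write-up is merely more explicit about well-definedness, the homomorphism property, and injectivity, which the paper leaves implicit.
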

\begin{proof} By the decomposition above, it follows that for every $i,j$, we have $P_iC_{i,j}Q_j= C_{i,j}$.
\begin{eqnarray*}
PCQ & = (\Sigma_{k} E_{kk} \otimes P_{k}) (\Sigma_{i,j} E_{i,j} \otimes C_{i,j}) (\Sigma_{l} E_{l,l} \otimes Q_{l}) \\
& = \Sigma_{i,j,k,l} (E_{kk}E_{i,j}E_{ll} \otimes P_{k}C_{i,j}Q_{l})\\
& = \Sigma_{i,j}(E_{i,j} \otimes P_{i}C_{i,j}Q_{j}). \end{eqnarray*}
The canonical map sending $P \to \left(P_{1}, P_{2}, P_{3}, \ldots, P_{m_{1}}  \right)$ gives the required inclusion. 
\end{proof}
\begin{remark} \label{ok now} Note that until now, we have used conditions ii) and iii). So, for any $C$ satisfying those conditions, Lemma~\ref{Tc break} is valid.  \end{remark}

We denote the particular $C_{i,j}$ satisfying condition (iv) by  $C^\prime$ to increase readability.  We show that if the first row of $C^\prime$ has two consecutive unique elements then $T_{C^\prime}$ is the cyclic group generated by the permutation $(1,2,\ldots,d)$. 
Recall that 
\[T_{C^\prime}=\{P\in S_d \vert\,Q\in S_d\,\text{and}\, PC^\prime Q=C^\prime\}.\]
\begin{theorem}
If $C^\prime$ has two consecutive unique elements in the first row then $T_{C^\prime}=\langle(1,2,\ldots,d)\rangle$.

\end{theorem}
\begin{proof}
The proof follows from the fact that in a circulant matrix with two consecutive unique elements, an unique triangle is formed when two consecutive rows are taken into account. For sake of exposition let us call these two unique elements $a$ and $b$ where $a$ occurs left of $b$ except in the isolated case where $a$ occurs as the last element and $b$ the first element. If we look at the first two rows, there is a unique position for $a,b$ and then the $a$ occurs below $b$ in the circulant matrix $C^\prime$. We will only concentrate on this triangular pattern in this proof.

Now we concentrate on the first row and assume that $P(1)=k_1$. Now $P$ maps the unique elements $a,b$ to the row $k_1$. Then $Q$ maps these columns that contain this $a,b$ to columns that preserves the circulant nature of $C^\prime$. Now assume that $P(2)=k_2$. Now we look at $a$ an element in the second row. Since it occurs right below $b$, $Q$ will move it to the same column as it moved $b$. Furthermore, the action of $Q$ will preserve the circulant nature of $C^\prime$. Now notice that the unique triangle must be preserved. This implies that $k_2=k_1+1$. This completes the proof.
\end{proof}
\begin{remark}
It is easy to see that if there is an unique element and another element repeating in the rest of the places then $T_{C^\prime}$ is the full symmetric group. In the above theorem, two consecutive unique elements is sufficient for $T_{C^\prime}=\langle(1,2,\ldots,d)\rangle$. One question arises, is this condition necessary? This translates to, are there any other pattern that give rise to $T_{C^\prime}=\langle (1,2,\ldots,d)\rangle$? Furthermore, we have noticed while doing computer experiments that in the case of $d=5$, for some repeat patterns in the first row of $C^\prime$, $T_{C^\prime}$ is the dihedral group of size $10$. For $d=7$ it was a much larger group. So the obvious question arises, can one classify $T_{C^\prime}$ based on the repeat pattern in the first row of $C^\prime$. 
\end{remark}
So, the required decomposition $T_C \hookrightarrow S_d \times S_{d} \times \cdots \times S_d \times \langle(1,2,\ldots,d)\rangle \times S_d \times \cdots \times S_d$ follows. This proves the subgroup decomposition lemma. Now we are in a position to reach the main theorem using condition (i).

\begin{lemma} \label{for_all_exactly_1} Let $C_{ij}$ be matrices as in condition (i). For all $i,j$ we have, for every $P$ there is at most one solution $Q$ such that $P C_{i,j} Q=C_{i,j}$.
Similarly, for each $Q$ there is at most one $P$ such that $P C_{i,j}Q=C_{i,j}$. 
\end{lemma}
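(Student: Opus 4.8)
The plan is to reduce the entire statement to the preceding lemma, which guarantees that a circulant $C_{i,j}$ satisfying condition (i) has pairwise distinct columns and pairwise distinct rows. Writing $C := C_{i,j}$ for brevity, I observe that the claim is purely a uniqueness (injectivity) assertion, so I never have to produce a $Q$ or a $P$ --- I only have to rule out two of them. I would begin by recording the elementary fact that left multiplication by a permutation matrix $P$ reorders the rows of a matrix and right multiplication by a permutation matrix $Q$ reorders its columns.

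From this I extract two injectivity facts. If a $p\times p$ matrix $M$ has pairwise distinct rows, then the map $P\mapsto PM$ is injective on permutation matrices: $P_1M=P_2M$ says that the two row-reorderings of $M$ agree entry-for-entry, and once all rows of $M$ are distinct this forces $P_1=P_2$. Dually, if $M$ has pairwise distinct columns then $Q\mapsto MQ$ is injective. I then note that distinctness of rows (resp.\ columns) survives an auxiliary column (resp.\ row) permutation, because such a permutation acts uniformly across the matrix: two rows of $CQ$ coincide exactly when the corresponding two rows of $C$ coincide after applying the same column permutation $Q$, i.e.\ exactly when they already coincided in $C$. Hence, by the preceding lemma, $CQ$ still has distinct rows and $PC$ still has distinct columns.

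Finally I would fit the two halves together. Fixing $P$ and setting $N:=PC$, the equation $PCQ=C$ becomes $NQ=C$; since $N$ has distinct columns, injectivity of $Q\mapsto NQ$ shows $Q$ is determined, so at most one $Q$ can work. Symmetrically, fixing $Q$ and setting $M:=CQ$, the equation reads $PM=C$; since $M$ has distinct rows, injectivity of $P\mapsto PM$ pins down $P$. I do not anticipate any genuine obstacle: all the substance sits in the distinctness of rows and columns already proved, and the only thing demanding care is the bookkeeping of the second paragraph, namely that row/column distinctness is preserved under the auxiliary permutation so that the injectivity facts apply to $N$ and $M$ rather than to $C$ itself.
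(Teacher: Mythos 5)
Your proposal is correct and follows essentially the same route as the paper's own proof: both reduce to the preceding lemma's guarantee that $C_{i,j}$ has pairwise distinct columns (resp.\ rows), observe that this distinctness is preserved under the auxiliary row (resp.\ column) permutation, and conclude that the remaining permutation is forced. Your version merely makes explicit the injectivity bookkeeping that the paper compresses into two sentences.
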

\begin{proof}
Since no two columns of $C_{i,j}$ are identical after the action of $P$, no two columns of $PC_{i,j}$ are identical and then there is unique $Q$ that can reorder columns to get back $C_{i,j}$. Similar row argument proves the unique $P$.
  \end{proof}
\begin{lemma}\label{one-to-one}Let $P_1, P_2 \in T_c$ where $P_1= \left( P_{11},P_{12},\ldots P_{1 m_1}\right)$ and $P_2 = \left( P_{21},P_{22},\ldots, P_{2m_{1}}\right)$. If $P_{1i} = P_{2i}$ for some $i$, then $P_1 = P_2$. 
\end{lemma}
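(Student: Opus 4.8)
The plan is to read Lemma~\ref{one-to-one} as an injectivity statement. As stated it needs the (evidently intended) hypothesis $P_1 \neq P_2$; read that way, the claim is that distinct elements of $T_C$ cannot agree in even a single block, i.e.\ for each block index $i$ the coordinate projection $T_C \to \mathrm{S}_p$, $P \mapsto P_i$, is injective. I would prove the contrapositive: assume $P_{1i}=P_{2i}$ for some fixed $i$ and deduce $P_1=P_2$ (in particular the decompositions have the same length $m_1$, so the $m_2$ in the second decomposition should read $m_1$).

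The engine is the per-block uniqueness of Lemma~\ref{for_all_exactly_1}. Since condition (i) is imposed on every block $C_{ij}$, no two columns and no two rows of any $C_{ij}$ coincide, so for each block the equation $PC_{ij}Q=C_{ij}$ has at most one solution $Q$ for a given $P$ and at most one $P$ for a given $Q$. Together with the block relation from Lemma~\ref{Tc break}, that $(P,Q)\in\mathrm{Stab}(C)$ is equivalent to $P_iC_{ij}Q_j=C_{ij}$ for all $i,j$, this lets me move information across blocks along rows and along columns.

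First I would fix partners $Q^{(1)},Q^{(2)}$ with $(P_1,Q^{(1)}),(P_2,Q^{(2)})\in\mathrm{Stab}(C)$ and decompose $Q^{(t)}=(Q^{(t)}_1,\dots,Q^{(t)}_{m_2-m_1})$. Step one keeps the row index at the distinguished $i$ and varies $j$: for every $j$ both $Q^{(1)}_j$ and $Q^{(2)}_j$ solve $P_{1i}C_{ij}(\cdot)=C_{ij}$ (using $P_{1i}=P_{2i}$), so the ``$P$ determines $Q$'' half of Lemma~\ref{for_all_exactly_1} forces $Q^{(1)}_j=Q^{(2)}_j$; hence $Q^{(1)}=Q^{(2)}$. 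Step two fixes any column index $j$ and varies the row index $i'$: for every $i'$ both $P_{1i'}$ and $P_{2i'}$ solve $(\cdot)C_{i'j}Q^{(1)}_j=C_{i'j}$ (using $Q^{(1)}_j=Q^{(2)}_j$), so the ``$Q$ determines $P$'' half forces $P_{1i'}=P_{2i'}$ for all $i'$, i.e.\ $P_1=P_2$.

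The main thing to watch is that both directions of uniqueness are genuinely needed and genuinely available: the determination chain runs $P_{1i}\Rightarrow Q^{(1)}\Rightarrow(\text{all }P_{1i'})$, so it uses the ``rows determine the column permutation'' half of Lemma~\ref{for_all_exactly_1} as essentially as the ``columns determine the row permutation'' half, and this is exactly why condition (i) must hold for \emph{all} blocks rather than a single distinguished one. A minor bookkeeping point is the placement of inverses in the block relation ($P_i$ versus $P_i^{-1}$, depending on the convention for the $\mathrm{S}_k\times\mathrm{S}_n$ action); since only existence and uniqueness of solutions to the block equations are used, this does not affect the argument.
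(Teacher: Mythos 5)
Your proof is correct and follows essentially the same route as the paper's: assume agreement in one block, use the ``$P$ determines $Q$'' half of Lemma~\ref{for_all_exactly_1} across all column blocks to force $Q^{(1)}=Q^{(2)}$, then use the ``$Q$ determines $P$'' half across all row blocks to conclude $P_1=P_2$. Your side remarks (the implicit hypothesis $P_1\neq P_2$, the $m_2$-for-$m_1$ typo in the statement, and the irrelevance of inverse placement) are accurate clarifications of what the paper leaves tacit, but the argument itself is the paper's argument.
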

\begin{proof}
Suppose there exists $i_0$ such that $P_{1i_{0}}=P_{2i_{0}}$.
By condition (i), there is a $j_0$, such that,  $P_{1i_{0}}C_{i_{0}j_0}Q_{1j_0}=P_{2i_{0}} C_{i_{0}j_0}Q_{2j_0}=C_{i_{0}j_0}$ for some permutation matrices $Q_{1j_0},Q_{2j_0}$. 
From above, we get $Q_{1j_0}=Q_{2j_0}$.
 Again apply the same lemma on $C_{ij_0}$ for any $i$ and we get that $P_{1i}=P_{2i}$ for all $i$. 
Thus, we have proved that if for some $i_{0}$, $P_{1i_{0}}= P_{2i_{0}}$ then for all $i$, $P_{1i}=P_{2i}$ and hence $P_1=P_2$.
\end{proof}

Note, this means that there is an injective mapping from the group $T_{C}$ to the group $T_{C_{i r_{i}}}$ for any component in the above decomposition. In particular, if we choose $C_{ir_i}$ as a matrix satisfying condition (iv) then that corresponding component is the cyclic group $\langle(1,2,\ldots,d)\rangle$. Thus we get that $\vert T_C \vert \leq \vert T_{C^\prime} \vert \leq d$.

\begin{corollary} \label{min} The minimal degree of $\Pi^{1}(\mathrm{H})$ and $\Pi^{2}(\mathrm{H})$ is at least $d$.
\end{corollary}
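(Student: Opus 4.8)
The plan is to read the minimal degree directly off the injective projection of $T_C$ onto its affine coordinate, so almost all of the genuine work has already been carried out in the lemmas above. Recall that the minimal degree of a permutation group is the least number of points displaced by a non-identity element. Since $\Pi^{1}(\mathrm{H}) = T_C$ by Lemma~\ref{S_k in}, it suffices to bound the minimal degree of $T_C$ acting on its $k = m_1 p$ points, and then transport the bound to $\Pi^{2}(\mathrm{H})$.

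First I would record that every non-identity element of $\mathrm{AGL}(\mathbb{F}_p)$ moves at least $p-1$ points. Indeed such an element is a map $x \mapsto ax + b$ with $(a,b) \neq (1,0)$: if $a = 1$ it is a non-trivial translation fixing no point, while if $a \neq 1$ it fixes only the unique point $x = b/(1-a)$. In both cases at least $p-1$ of the $p$ points are displaced. Next I would combine this with the rigidity supplied by Lemma~\ref{one-to-one}. Under the decomposition $T_C \hookrightarrow \mathrm{S}_p \times \cdots \times \mathrm{AGL}(\mathbb{F}_p) \times \cdots \times \mathrm{S}_p$ of the (weak) Subgroup Decomposition Lemma, that lemma shows that distinct elements of $T_C$ differ in every coordinate; comparing an arbitrary $P \in T_C$ with the identity, any $P \neq e$ must have non-identity image in every factor, in particular in the affine factor. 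That affine image already displaces at least $p-1$ points, all lying inside a single block of size $p$, so $P$ itself displaces at least $p-1$ points. Hence the minimal degree of $\Pi^{1}(\mathrm{H}) = T_C$ is at least $p-1$.

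Finally, for $\Pi^{2}(\mathrm{H})$ I would use Lemma~\ref{S_k in} again: its elements have the form $P = P_1^{-1} \oplus P_2$ with $P_1 \in T_C$, and $P_1$ determines $P_2$ uniquely, so $P = e$ forces $P_1 = e$. Thus a non-identity element of $\Pi^{2}(\mathrm{H})$ has $P_1 \neq e$, whence $P_1^{-1} \in T_C$ is non-identity and, by the previous paragraph, already moves at least $p-1$ of the first $k$ coordinates. These displaced points lie in the support of $P$, so $P$ moves at least $p-1$ points, giving the minimal degree of $\Pi^{2}(\mathrm{H})$ at least $p-1$ as well.

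The only content beyond bookkeeping is the observation that a non-identity affine map forces displacement $p-1$; the classification-based lemmas above do the genuinely hard work. The one point to be careful about is that Lemma~\ref{one-to-one} is exactly what guarantees the affine coordinate of a non-identity element cannot collapse to the identity, so I would flag that step rather than treating the coordinatewise behaviour as automatic.
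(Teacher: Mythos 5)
Your proposal is correct and takes essentially the same route as the paper: Lemma~\ref{one-to-one} forces the affine coordinate of any non-identity element of $T_C=\Pi^{1}(\mathrm{H})$ to be non-identity, a non-identity element of $\mathrm{AGL}(\mathbb{F}_p)$ fixes at most one point, and the uniqueness of $P_2$ in Lemma~\ref{S_k in} transfers the bound from $\Pi^{1}(\mathrm{H})$ to $\Pi^{2}(\mathrm{H})$. The only (trivially fixable) blemish is your sentence ``$P = e$ forces $P_1 = e$'': what the uniqueness of $P_2$ actually gives, and what your next sentence really uses, is the converse implication $P_1 = e \Rightarrow P_2 = e \Rightarrow P = e$, whose contrapositive is the needed claim that $P \neq e$ implies $P_1 \neq e$.
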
 
\begin{proof}  Clearly, from $T_C=\Pi^{1}(\mathrm{H})$ in Lemma~\ref{S_k in}, we get the minimal degree of $\Pi^{1}(\mathrm{H})$ to be at least $d$. Because, non-identity elements of $T_C$ must be non-identity in the cyclic component of the direct product decomposition (by Lemma \ref{one-to-one}). Moreover, non-identity elements of the cyclic component can not fix any element. So, the minimal degree of $T_C=\Pi^{1}(\mathrm{H})$ is at least $d$. For minimal degree of $\Pi^{2}(\mathrm{H})$ we show that the minimal degree of $\Pi^{2}(\mathrm{H})$ is at least as large as that of $\Pi^{1}(\mathrm{H})$. Note that, if $P_{1} \oplus P_{2} \in \Pi^{2}(\mathrm{H})$ is a non-identity element then $P_{1} \neq \mathrm{I}$. This follows from the uniqueness of $P_2$ in Lemma~\ref{S_k in}. Now, the non-identity element $P_1$ has support of at least the size of the minimal degree of $\Pi^{1}(\mathrm{H})$ as it is also an element of $\Pi^{1}(\mathrm{H})$. 

Thus minimal degree of $\Pi^{2}(\mathrm{H})$ is greater than or equal to the minimal degree of $\Pi^{1}(\mathrm{H}) \geq d$. 
\end{proof}
\section{Subgroup $K$ is indistinguishable}
In this section, we collect results from the last section on bounds on projections of $\mathrm{H}$ into Equation~\ref{hidden} to show that $\mathrm{K}$ is indistinguishable.  
Recall that $\mathrm{G} = \mathrm{GL}_{k}\left( \mathbb{F}_{2}\right) \times \mathrm{S}_n$ and $\mathrm{H}=\lbrace{ (A,P) : A^{-1}\mathcal{H}P = \mathcal{H} \rbrace} $. Now from the discussion in previous section, we know that $A$ is a permutation matrix.

Let $h =(\sigma_1,\sigma_2) \in \mathrm{H}$. Then

\[ {\vert h^{\mathrm{G}}\vert} ^{-\frac{ 1}{2}} = \dfrac{ \vert C_{\mathrm{G}}(h) \vert^\frac{1}{2}}{\vert\mathrm{G}\vert ^ \frac{1}{2}} = \left(\dfrac{C_{\mathrm{GL}_{k}(\mathbb{F}_2)}(\sigma_1)}{ \vert \mathrm{GL}_{k}(\mathbb{F}_2) \vert} \right)^ \frac{1}{2} \left( \dfrac{C_{\mathrm{S}_n}(\sigma_2)}{\vert \mathrm{S}_n \vert}\right)^\frac{1}{2} \]

\[ = {\vert \sigma_1^{\mathrm{GL}_k(\mathbb{F}_2)}\vert}^{-\frac{1}{2}} {\vert \sigma_2^{\mathrm{S}_n} \vert} ^{-\frac{1}{2}} \leq {\vert \sigma_1^{\mathrm{S}_k}\vert }^{-\frac{1}{2}} {\vert \sigma_2^{\mathrm{S}_n} \vert} ^{-\frac{1}{2}}. \]
The last inequality follows because the conjugacy class of $h$ in a permutation group is a subset of a conjugacy class in the general linear group. Also note, if $h$ is not the identity in $\mathrm{H}$ then
$\sigma_1 \neq \mathrm{I}$ and $\sigma_2 \neq \mathrm{I}$ by the uniqueness property in Lemma~\ref{S_k in}.

From Equation~\ref{k to h} we have, 	
\begin{equation} \label{hlimit}
\sum_{h\neq e} {\vert h^{\mathrm{G}}\vert} ^{-\frac{ 1}{2}} \leq \sum_{\sigma_1,\sigma_2\neq e}{\vert \sigma_1^{\mathrm{S}_k}\vert }^{-\frac{1}{2}} {\vert \sigma_2^{\mathrm{S}_n} \vert} ^{-\frac{1}{2}} =  \sum_{\sigma_1\in \Pi^{1}(\mathrm{H})\setminus e} {\vert \sigma_1^{\mathrm{S}_k}\vert }^{-\frac{1}{2}} \sum_{\sigma_2 \in \Pi^{2}(\mathrm{H})\setminus e}  {\vert \sigma_2^{\mathrm{S}_n} \vert} ^{-\frac{1}{2}}. \end{equation} 
We present this for sum over $\sigma_1$. A similar result can be obtained for $\sigma_2$.

Let $\Gamma_t$ denote the set of elements of $\mathrm{S}_k$ of support $t$. Then from a well-known theorem~\cite[Theorem B]{KS2} it follows that there exists absolute constants $b, \varepsilon$ such that if $\Pi^{1}(\mathrm{H})$ has minimal degree greater than $\delta  \geq b$ then  
$\vert \Gamma_t \vert \leq k^\frac{-\varepsilon \delta}{2} {{k}\choose {t}} ^\frac{1}{2} \left( t! \right)^\frac{1}{4}$.  

From another well known theorem~\cite[Lemma 8]{KS2}, we know that if $\mathcal{C}$ is a conjugacy class of elements of support $t$ inside $\mathrm{S}_k$. Then $\vert \mathcal{C} \vert \geq c {{k}\choose{t}} \sqrt{t!} t^{-\frac{1}{2}}$ where $c$ is some positive absolute constant. 
Therefore,
\[  \sum_{\sigma_1\in \Gamma_t} {\vert {\sigma_1^{\mathrm{S}_k}}\vert}^{-\frac{1}{2}} \leq c^{-\frac{1}{2}} \vert \Gamma_t\vert  {k \choose t} \left(k!\right)^{-\frac{1}{4}} k^{\frac{1}{4}}.\]
This gives,
\[ \sum_{\sigma_1 \in \Pi^{1}(\mathrm{H})\setminus e }\vert {\sigma_1^{\mathrm{S}_k} \vert }^{-\frac{1}{2}} = \sum_{t=\delta}^{k} \sum_{{\sigma_1}\in {\Gamma_t}} {\vert {\sigma_1^{\mathrm{S}_k}}\vert}^{-\frac{1}{2}} \]
\[ \leq \sum_{t=\delta}^{k}c^{-\frac{1}{2}} \vert \Gamma_t\vert  {k \choose t} \left(k!\right)^{-\frac{1}{4}} k^{\frac{1}{4}}. \]
Substituting,
 \[ \sum_{\sigma_1 \in \Pi^{1}(\mathrm{H})\setminus e }\vert {\sigma_1^{\mathrm{S}_k} \vert }^{-\frac{1}{2}} \leq \sum_{t=\delta}^{k}  c^{-\frac{1}{2}} k^{-\varepsilon \delta} k^{\frac{1}{4}} \leq a_k k^{ -\varepsilon \delta} k^{\frac{5}{4}} \] for some constant $a_k \geq 0$.
Similarly we can get  an upper bound for the other sum. Thus, we have 
 \begin{eqnarray*}
 \sum_{\sigma_1 \in \Pi^{1}(\mathrm{H})\setminus e }\vert {\sigma_1^{\mathrm{S}_k} \vert }^{-\frac{1}{2}}& \leq &  a_k k^{ -\varepsilon \delta_{1}} k^{\frac{5}{4}}\\
 \sum_{\sigma_1 \in \Pi^{1}(\mathrm{H})\setminus e }\vert {\sigma_1^{\mathrm{S}_n} \vert }^{-\frac{1}{2}} &\leq &  a_n n^{ -\varepsilon \delta_{2}} n^{\frac{5}{4}}
 \end{eqnarray*}
where $\delta_1,\delta_2$ are minimal degrees of $\Pi^{1}(\mathrm{H})$ and $\Pi^{2}(\mathrm{H})$. Putting this in Equation~\ref{hlimit}, we get
\begin{equation}
    \label{pp}
 \sum_{h\neq e} {\vert h^{\mathrm{G}}\vert} ^{-\frac{ 1}{2}} \leq  a_k a_n k^{ -\varepsilon \delta_{1}} k^{\frac{5}{4}} n^{ -\varepsilon \delta_2} n^{\frac{5}{4}}.\end{equation} 
 \begin{proof}[\textbf{Proof of Theorem A}]\leavevmode\\ 
 To prove $K$ is indistinguishable, we need to show that
$\mathcal{D}_{K} \leq \left(\log(\vert \mathrm{G}^{2} \rtimes \mathbb{F}_2 \vert)\right)^{-c}$  for every positive constant $c$.
 From Equation~\ref{k to h}, it suffices to prove that,
\[\log\left( \vert \mathrm{H} \vert \sum_{h\neq e} {\vert h^{\mathrm{G}}\vert} ^{-\frac{ 1}{2}}+   \frac{ \vert \mathrm{H}\vert \left(\vert \mathrm{H}\vert + \vert \mathrm{ H }  \vert^2\right)^{\frac{1}{2}}}{{\vert \mathrm{G}\vert}^{\frac{1}{2}}} \right)\leq\log(\Delta_c)\]
where $\Delta_c= \left(\log(\vert \mathrm{G}^{2} \rtimes \mathbb{F}_2 \vert)\right)^{-c}$.

Now
\begin{eqnarray*}\label{last}
& &\log\left( \vert \mathrm{H} \vert \sum_{h\neq e} {\vert h^{\mathrm{G}}\vert} ^{-\frac{1}{2}}+\frac{ \vert \mathrm{H}\vert \left(\vert \mathrm{H}\vert + \vert \mathrm{H}  \vert^2\right)^{\frac{1}{2}}}{{\vert \mathrm{G}\vert}^{\frac{1}{2}}} \right)\\ 
&\leq & \log\left( 2 \max \left\lbrace \vert \mathrm{H} \vert \sum_{h\neq e} {\vert h^{\mathrm{G}}\vert} ^{-\frac{ 1}{2}},   \frac{ \vert \mathrm{H}\vert \left(\vert \mathrm{H}\vert + \vert \mathrm{ H }  \vert^2\right)^{\frac{1}{2}}}{{\vert \mathrm{G}\vert}^{\frac{1}{2}}} \right\rbrace  \right)\\
&=& \log(2) + \log\left(  \max \left\lbrace \vert \mathrm{H} \vert \sum_{h\neq e} {\vert h^{\mathrm{G}}\vert} ^{-\frac{ 1}{2}},   \frac{ \vert \mathrm{H}\vert \left(\vert \mathrm{H}\vert + \vert \mathrm{ H }  \vert^2\right)^{\frac{1}{2}}}{{\vert \mathrm{G}\vert}^{\frac{1}{2}}} \right\rbrace  \right).
\end{eqnarray*}
Putting $\vert \mathrm{H}\vert \leq d$ and $\delta_1,\delta_2 \geq d$  and Equation~\ref{pp} one can verify that the above term is less than $\log(\Delta_c)$ for large enough $d$.

This completes our proof of indistinguishability of the subgroup $K$, making the cryptosystem resistant to hidden subgroup attacks.
\end{proof}

\section{Conclusion}
Niederreiter cryptosystems using quasi-cyclic codes are popular these days. The main reason behind this interest is quantum-security. This makes it a good candidate for post-quantum cryptography. This is evident from the NIST submissions~\cite{nist1,nist2,BIKE}.

Historically speaking, post-quantum cryptography grew out of Shor's algorithm to factor integers which was later used to solve the discrete logarithm problem. These algorithms use the hidden subgroup problem in finite abelian groups. This hidden subgroup problem follows from the scrambler-permutation problem in the non-commutative setting which uses complex irreducible representations of the group. If this hidden subgroup is indistinguishable by the quantum Fourier sampling from the identity subgroup then we can not solve the corresponding scrambler-permutation problem. This makes Niederreiter cryptosystem quantum secure. The idea behind the hidden subgroup problem for Niederreiter cryptosystem was put forward by Dinh et al.~\cite{Dinh} and the idea of distinguishability of subgroups was put forward by Kempe and Shalev~\cite{Kempe}.

We prove that for a Niederreiter cryptosystem using quasi-cyclic codes, satisfying certain conditions, the corresponding hidden subgroup is indistinguishable from the identity subgroup by weak quantum Fourier sampling. 

\section*{Acknowledgements} The first author was supported by the Singapore Ministry of Education and the National Research Foundation through the core grants of the Centre for Quantum Technologies. The second author was supported by a MATRICS research grant of SERB, govt.~of India and by a NBHM research grant. Both the authors thank Subhamoy Maitra and Joachim Rosenthal for insightful comments. 

We the authors owe a debt of gratitude to both the referees for their thorough reading of the manuscript and insightful comments. This has improved the content of this paper and its presentation.

\begin{small}
\bibliography{ref.bib}
\end{small}
\Addresses
\end{document}